\newcommand{\siglists}{{\Sigma_{\it list}}}
\newcommand{\sigpairs}{{\Sigma_{\it pair}}}
\newcommand{\siglistspairs}{{\Sigma_{\it lp}}}
\newcommand{\structsort}{{\bf struct}}
\newcommand{\listsort}{{\bf list}}
\newcommand{\elemsort}{{\bf elem}}
\newcommand{\pairsort}{{\bf pair}}
\newcommand{\elemsorts}{{\bf {Elem}}}
\newcommand{\structsorts}{{\bf {Struct}}}
\newcommand{\witness}{{\it wtn}}
\newcommand{\iwitness}{{\it wtn_{i}}}
\newcommand{\fwitness}{{\it wtn_{f}}}
\newcommand{\depth}{{\it depth}}
\newcommand{\smtlib}{SMT-LIB~2\xspace}
\newcommand{\consequence}[1]{\vdash_{{#1}}}
\newcommand{\valid}[1]{\vdash_{{#1}}}
\newcommand{\Is}{{\it Is}}
\newcommand{\Ind}{{\it Ind}}
\newcommand{\Fin}{{\it Fin}}
\newcommand{\constructors}{{\Ca}{\Oa}}
\newcommand{\consig}[1]{{#1}_{\mid \constructors}}
\newcommand{\selectors}{{\Sa}{\Ea}}
\newcommand{\predecessors}{{\Pa}}
\newcommand{\TREE}{{\it TREE}}
\newcommand{\canon}[1]{[[{#1}]]}
\newcommand{\qf}{{\it QF}}
\newcommand{\powerset}[1]{{{\cal P}({#1})}}
\newcommand{\fv}[2]{{\it vars}_{#1}({#2})}
\newcommand{\sorts}[1]{{\cal S}_{#1}}
\newcommand{\func}[1]{{\cal F}_{#1}}
\newcommand{\pred}[1]{{\cal P}_{#1}}
\newcommand{\Ta}{{\cal T}}
\newcommand{\Aa}{{\cal A}}
\newcommand{\Ba}{{\cal B}}
\newcommand{\Ca}{{\cal C}}
\newcommand{\Sa}{{\cal S}}
\newcommand{\Oa}{{\cal O}}
\newcommand{\Ea}{{\cal E}}
\newcommand{\Pa}{{\cal P}}
\newcommand{\set}[1]{\left\{#1\right\}}
\newcommand{\card}[1]{{\left |#1\right |}}
\newcommand{\A}{{\cal A}}
\newcommand{\w}{\wedge}
\newcommand{\ra}{\rightarrow}
\newcommand{\bi}{\begin{itemize}}
\newcommand{\ei}{\end{itemize}}
\newcommand{\be}{\begin{enumerate}}
\newcommand{\ee}{\end{enumerate}}
\newcommand{\bd}{\begin{description}}
\newcommand{\ed}{\end{description}}
\newcommand{\ora}{\overrightarrow}
\newcommand{\fe}{\varphi}
\newcommand{\suq}{\subseteq}
\newcommand{\st}{{\ |\ }}
\newcommand{\til}{,\dots,}
\renewenvironment{proof}{\noindent{\it Proof}\/:}{\qed\hspace*{1pt}}
\begin{document}
\title{Politeness for the Theory of\\ Algebraic Datatypes
  \thanks{This project was partially supported by a grant from the Defense
Advanced Research Projects Agency
(N66001-18-C-4012), the Stanford CURIS program, and
Jasmin Blanchette's European Research Council (ERC)
    starting grant Matryoshka (713999).
  }
}
%
%
\author{
Ying Sheng\inst{1}\orcidID{0000-0002-1883-2126}
\and
Yoni Zohar\inst{1}\orcidID{0000-0002-2972-6695} 
\and
Christophe Ringeissen\inst{2}\orcidID{0000-0002-5937-6059}
\and
Jane Lange\inst{1}\orcidID{0000-0002-0642-9815}
\and\\
Pascal Fontaine\inst{2,3}\orcidID{0000-0003-4700-6031}
\and
Clark Barrett\inst{1}\orcidID{0000-0002-9522-3084}
}
\authorrunning{Sheng et\ al.}
%
\institute{Stanford University\and
Universit\'e de Lorraine, CNRS, Inria, LORIA, F-54000 Nancy, France\and
Universit\'e de Li\`ege, Belgium
}
\maketitle              
\setcounter{footnote}{0}
\begin{abstract}
Algebraic datatypes, and among them lists and trees, have attracted a lot of interest in
automated reasoning and Satisfiability Modulo Theories (SMT).  Since its latest
stable version, the SMT-LIB standard defines a theory of algebraic datatypes,
which is currently supported by several mainstream SMT solvers.  
In this paper, we study this particular theory of datatypes and prove that it is strongly polite, 
showing also how it can be combined with other arbitrary disjoint theories using polite combination.  
Our results cover both inductive and finite datatypes, as well as their union.
The combination method uses a new, simple, and natural notion of additivity,
that enables deducing strong politeness from (weak) politeness.
\end{abstract}

\section{Introduction}
\label{sec:intro}
Algebraic datatypes such as lists and trees are extremely common
in many programming languages.
Reasoning about them is therefore crucial for modeling and verifying
programs.
For this reason, various decision procedures for algebraic datatypes
have been, and continue to be 
developed and employed by formal reasoning tools such as
theorem provers and Satisfiability Modulo Theories (SMT) solvers.
For example, the general algorithm of \cite{BST07} describes a decision
procedure for datatypes suitable for SMT solvers.
Consistently with the SMT paradigm, \cite{BST07} leaves the combination
of datatypes with other theories to general combination methods,
and focuses on parametric datatypes (or {\em generic} datatypes as they are
called in the programming languages community).

The traditional combination method of Nelson and Oppen 
\cite{DBLP:journals/toplas/NelsonO79}
is applicable for the combination of this theory with many other theories,
as long as  the other theory is {\em stably infinite} (a technical
condition that intuitively amounts to the ability to extend every
model to an infinite one).
Some theories of interest, however, are not stably infinite, 
the most notable one being the theory of fixed-width bit-vectors, which
is commonly used for modeling and verifying both hardware and software.
To be able to perform combinations with such theories, 
a more general combination method was designed \cite{10.1007/11559306_3},
which relies on {\em polite theories}.
Roughly speaking, a theory is polite if: $(i)$ every model 
can be arbitrarily enlarged; 
and $(ii)$ there is a {\em witness}, a function that 
transforms any quantifier-free formula to an equivalent
quantifier-free formula such that if
the original formula is satisfiable,
the new formula is satisfiable in a ``minimal'' interpretation.
This notion was later strengthened to
{\em strongly polite theories}~\cite{JBLPAR},
which also account for possible arrangements of the variables
in the formula.
Strongly polite theories can be combined with any other disjoint decidable theory, even if
that other theory is not stably infinite.
While strong politeness was already proven for several useful theories
(such as equality, arrays, sets, multisets \cite{10.1007/11559306_3}),
strong politeness of algebraic datatypes remained an unanswered question.

The main contribution of this paper is an
affirmative answer to this question. 
We introduce a {\em witness} function
that essentially 
``guesses" the right constructors of variables without an explicit
constructor in the formula.
We show how to ``shrink" any model of a formula that
is the output of this function into a minimal model.
The witness function, as well as the model-construction,
can be used by any SMT solver for the theory of datatypes
that implements polite theory combination.
We introduce and use the notion of additive witnesses, 
which allows us to prove politeness and conclude
strong politeness.
We further study the theory of datatypes beyond politeness and extend a
decision procedure for a subset of this theory 
presented in \cite{CFR19}
to support the full theory.

\subsection*{Related Work}
The theory investigated in this paper is that of algebraic datatypes,
as defined by the \smtlib standard \cite{SMTLib2017}.
Detailed information on this theory, including a decision procedure
and related work,
can be found in \cite{BST07}.  Later work extends this procedure
to handle
shared selectors \cite{DBLP:conf/cade/ReynoldsVBTB18} and co-datatypes \cite{DBLP:journals/jar/ReynoldsB17}.
More recent approaches for solving formulas about datatypes
use, e.g., theorem provers \cite{DBLP:conf/popl/KovacsRV17},
variant satisfiability \cite{DBLP:conf/lopstr/GutierrezM17,DBLP:journals/scp/Meseguer18}, and reduction-based decision procedures \cite{8531279,DBLP:journals/entcs/BonacinaE07,DBLP:journals/tocl/ArmandoBRS09}. 

In this paper, we focus on polite theory combination.
Other combination methods for non stably infinite theories include
shiny theories \cite{DBLP:journals/jar/TinelliZ05}, 
gentle theories \cite{DBLP:conf/frocos/Fontaine09}, and
parametric theories \cite{DBLP:conf/tacas/KrsticGGT07}.
The politeness property was introduced in 
\cite{10.1007/11559306_3}, and 
extends the stable infiniteness assumption initially used by
Nelson and Oppen. 
Polite theories can
be combined \`a la Nelson-Oppen with any arbitrary decidable theory. 
Later, a flaw in the original definition of politeness was found~\cite{JBLPAR},
and a corrected definition (here called {\em strong politeness}) was introduced.
%
Strongly polite theories were further studied in \cite{Casal2018}, where the
authors proved their
equivalence with shiny theories \cite{DBLP:journals/jar/TinelliZ05}.

More recently, it was proved~\cite{CFR19} that a general family
of datatype theories extended with bridging functions is strongly polite. 
This includes the theories of lists/trees with length/size functions.
The authors also proved that a class of axiomatizations of datatypes is 
strongly polite. 
In contrast,
in this paper we focus on standard interpretations, as defined by the \smtlib
standard, without any size function, but including selectors and testers.
One can notice that the theory of
standard lists without the length function, and more generally the theory of finite trees
without the size function, were not mentioned 
as polite in a recent
survey \cite{DBLP:conf/birthday/BonacinaFRT19}.
Actually, it was unclear to the authors of
\cite{DBLP:conf/birthday/BonacinaFRT19} whether these theories are strongly polite. This
is now clarified in the current paper.

\subsection*{Outline}
The paper is organized as follows.
\Cref{sec:prelim} provides the necessary notions from
first-order logic and polite theories.
\Cref{sec:smtlib} provides our working definition
of the theory of datatypes, which is based on \smtlib.
\Cref{secweakstrong} discusses the difference between politeness 
and strong politeness, and introduces
a useful condition for their equivalence.
\Cref{sec:politedt} contains the main result of this paper,
namely that the theory of algebraic datatypes is strongly polite.
\Cref{sec:axiomatizations} studies various axiomatizations
of the theory of datatypes, and relates them
to politeness.
\Cref{sec:conc} concludes with directions for further research.

\section{Preliminaries}
\label{sec:prelim}
\subsection{Signatures and Structures}
We briefly review usual definitions of
many-sorted first-order logic with equality (see
\cite{enderton2001mathematical,10.1007/978-3-540-30227-8_53}
for more details).
For any set $S$, 
an {\em $S$-sorted set} $A$ is a function from $S$ to $\powerset{X}\setminus\set{\emptyset}$ for some
set $X$ (i.e., $A$ assigns a non-empty set to every element of $S$),
such that $A(s)\cap A(s')=\emptyset$ whenever
$s\neq s'$.
We use $A_{s}$ to denote $A(s)$ for every $s\in S$, and call the elements
of $S$ {\em sorts}.
When there is no ambiguity, we sometimes treat sorted sets as sets 
(e.g., when writing expressions like $x\in A$).
%
Given a set $S$ (of sorts), the {\em canonical $S$-sorted set},
denoted $\canon{S}$,
satisfies $\canon{S}_{s}=\set{s}$ for every $s\in S$.
A {\em many-sorted signature} $\Sigma$ consists of
a set $\sorts{\Sigma}$ (of {\em sorts}),
a set $\func{\Sigma}$ of function symbols,
and a set $\pred{\Sigma}$ of predicate symbols.
Function symbols have arities of the form $\sigma_{1}\times\ldots\times\sigma_{n}\ra\sigma$,
and predicate symbols have arities of the form $\sigma_{1}\times\ldots\times\sigma_{n}$,
with $\sigma_{1}\til\sigma_{n},\sigma\in\sorts{\Sigma}$.
For each sort $\sigma\in\sorts{\Sigma}$,
$\pred{\Sigma}$ includes an {\em equality symbol} $=_{\sigma}$ of arity
$\sigma\times\sigma$.
We denote it by $=$ when $\sigma$ is clear from context.
$\Sigma$ is called {\em finite} if $\sorts{\Sigma}$, $\func{\Sigma}$,
and $\pred{\Sigma}$ are finite.

We assume an underlying $\sorts{\Sigma}$-sorted set 
of {\em variables}.
Terms, formulas, and literals are defined in the usual way.
For a $\Sigma$-formula $\phi$ and a sort $\sigma$, we denote the set of 
free variables in $\phi$ of sort $\sigma$ by $\fv{\sigma}{\phi}$.
This notation naturally extends to $\fv{S}{\phi}$ when $S$ is a set of sorts.
A sentence is a formula without free variables.
We denote by $\qf(\Sigma)$ the set of quantifier-free formulas of $\Sigma$.
A $\Sigma$-literal is called {\em flat} if it has one of the following forms: 
$x=y$, $x\neq y$, $x=f(x_{1}\til x_{n})$, 
$P(x_{1}\til x_{n})$, or $\neg P(x_{1}\til x_{n})$ 
for some variables $x,y,x_{1}\til x_{n}$ and  function and predicate symbols
$f$ and $P$ from $\Sigma$.

A {\em $\Sigma$-structure} is a many-sorted structure for $\Sigma$, without interpretation of variables.
It consists of a $\sorts{\Sigma}$-sorted set $A$, and interpretations
to the function and predicate symbols of $\Sigma$.  
We further require that  $=_{\sigma}$ is interpreted
as the identity relation over $A_\sigma$ for every $\sigma\in\sorts{\Sigma}$.
A {\em $\Sigma$-interpretation} $\Aa$ is an extension of
a $\Sigma$-structure with interpretations to some set of variables.  
For any $\Sigma$-term $\alpha$, $\alpha^{\Aa}$
denotes the interpretation of $\alpha$ in $\Aa$.
When $\alpha$ is a set of $\Sigma$-terms, $\alpha^{\Aa}=\set{x^{\Aa}\mid x\in \alpha}$.
Similarly, $\sigma^{\Aa}$, $f^{\Aa}$ and $P^{\Aa}$ denote the interpretation
of $\sigma$, $f$ and $P$ in $\Aa$.
Satisfaction is defined as usual.  
$\Aa \models \varphi$ denotes that $\Aa$ satisfies $\varphi$.

A $\Sigma$-theory $T$ is a class of $\Sigma$-structures.
A $\Sigma$-interpretation 
whose variable-free part is in $T$ is called
a $T$-interpretation.
A $\Sigma$-formula $\phi$ is $T$-satisfiable if $\Aa\models\phi$ for some $T$-interpretation $\Aa$.
Two formulas $\phi$ and $\psi$ are {\em $T$-equivalent} if they are satisfied by the same class of $T$-interpretations.
Let $\Sigma_{1}$ and $\Sigma_{2}$ be signatures,
$T_{1}$ a $\Sigma_{1}$-theory, and $T_{2}$ a $\Sigma_{2}$-theory.
The combination of $T_{1}$ and $T_{2}$, denoted 
$T_{1}\oplus T_{2}$,
 is the class of $\Sigma_{1}\cup\Sigma_{2}$-structures $\Aa$
 such that $\Aa^{\Sigma_{1}}$ is in $T_{1}$ and $\Aa^{\Sigma_{2}}$ is in $T_{2}$,
where $\Aa^{\Sigma_{i}}$ is the restriction of $\Aa$ to $\Sigma_{i}$ for
$i\in\set{1,2}$.

\subsection{The \smtlib Theory of Datatypes}
\label{sec:smtlib}

In this section we formally define the \smtlib theory of algebraic
datatypes.
The formalization is based on \cite{SMTLib2017}, but
is adjusted to suit our investigation of 
politeness.

\begin{definition}
\label{def:trees}
Given a signature $\Sigma$,
a set 
$S\suq\sorts{\Sigma}$ and
an $S$-sorted set $A$, 
the set of {\em $\Sigma$-trees} over $A$ of sort $\sigma \in\sorts{\Sigma}$ is denoted by 
$T_{\sigma}(\Sigma,A)$ and is inductively defined as follows:
  \begin{itemize}
  \item $T_{\sigma,0}(\Sigma,A) = A_\sigma$ if $\sigma\in S$ and $\emptyset$ otherwise.
  \item $T_{\sigma,i+1}(\Sigma,A) = T_{\sigma,i}(\Sigma,A) \cup \{ c(t_1,\dots,t_n) ~|~ c:\sigma_{1}\times\ldots\times\sigma_{n} \ra\sigma \in \func{\Sigma},  t_j \in T_{\sigma_j,i}(\Sigma,A) \mbox{ for } j=1,\dots,n \}$ for each $i \geq 0$.
  \end{itemize}
Then $T_{\sigma}(\Sigma,A) = \bigcup_{i \geq 0} T_{\sigma,i}(\Sigma,A)$.
The {\em depth} of a $\Sigma$-tree over $A$ is inductively defined by
$\depth(a)=0$ for every $a\in A$,
$\depth(c)=1$ for every $0$-ary function symbol $c\in\func{\Sigma}$,
and $\depth(c(t_{1}\til t_{n}))=1+max(\depth(t_{1})\til \depth(t_{n}))$
for every $n$-ary function symbol $c$ of $\Sigma$.
\end{definition}  

The idea behind \Cref{def:trees} is that
$T_{\sigma}(\Sigma,A)$ contains all ground $\sigma$-sorted terms
constructed from the elements of $A$ (considered as constant 
symbols) and the function symbols of $\Sigma$.

\begin{example}
\label{exm:tree}
Let $\Sigma$ be a signature with two sorts, $\elemsort$ and $\structsort$, and whose function symbols are $b$ of arity $\structsort$, and 
$c$ of arity $(\elemsort \times\structsort \times\structsort )\ra\structsort$.
Consider the $\{ \elemsort \}$-sorted set $A = \{ a \}$. For the $\elemsort$ sort, $T_{\elemsort}(\Sigma,A)$ is the singleton $A = \{ a \}$ and the $\Sigma$-tree $a$ is of depth $0$. For the $\structsort$ sort, 
$T_{\structsort}(\Sigma,A)$ includes infinitely many $\Sigma$-trees, such as $b$ of depth $1$, $c(a,b,b)$ of depth $2$, and $c(a,c(a,b,b),b)$ of depth $3$.
\end{example}

\begin{definition}
\label{def:IDT_sig}
A finite signature $\Sigma$ is called a {\em datatypes signature}
if
$\sorts{\Sigma}$ is the disjoint union of two sets of sorts
$\sorts{\Sigma}=\elemsorts_{\Sigma}\uplus\structsorts_{\Sigma}$ and
$\func{\Sigma}$ is the disjoint union of two sets of function symbols 
$\func{\Sigma}=\constructors_{\Sigma}\uplus\selectors_{\Sigma}$,
such that
$\selectors_{\Sigma}=
\{s_{c,i}:\sigma\ra\sigma_{i} \mid
c\in\constructors_{\Sigma},
c:\sigma_{1}\til\sigma_{n}\ra\sigma,
1\leq i\leq n
\}$ and
$\predecessors_{\Sigma}=
\{is_{c}:\sigma \mid
c\in\constructors_{\Sigma},
c:\sigma_{1}\til\sigma_{n}\ra\sigma
\}$.
We denote
by
$\consig{\Sigma}$ the signature
with the same sorts as $\Sigma$, no predicate symbols (except $=_{\sigma}$ for $\sigma\in\sorts{\Sigma}$),
and whose function symbols are $\constructors_{\Sigma}$.
We further require the following well-foundedness
requirement:
$T_{\sigma}(\consig{\Sigma},\canon{\elemsorts_{\Sigma}}) \neq \emptyset$ for any 
$\sigma \in \structsorts_{\Sigma}$.
\end{definition}

From now on, we omit the subscript $\Sigma$ from the above notations
(e.g., when writing $\canon{\elemsorts}$ rather than $\canon{\elemsorts_{\Sigma}}$, $\constructors$ rather than $\constructors_{\Sigma}$) 
whenever
$\Sigma$ is clear from the context.
Notice that \Cref{def:IDT_sig} remains equivalent
if we replace $\canon{\elemsorts}$ by any 
(non-empty) $\elemsorts$-sorted set $A$.
The set $\canon{\elemsorts}$ has been chosen since this minimal $\elemsorts$-sorted set is sufficient.

In accordance with \smtlib, 
we call the elements of $\constructors$ {\em constructors},
the elements of $\selectors$ {\em selectors},
and the elements of $\predecessors$ {\em testers}.
$0$-ary constructors are called {\em nullary}.
In what follows, $\Sigma$ denotes an arbitrary datatypes signature. 

In the next example we review some common datatypes signatures.

\begin{example}
\label{exm:dt}
The signature $\siglists$ has two sorts, $\elemsort$ and $\listsort$.
Its function symbols are $cons$ of arity $(\elemsort\times\listsort)\ra\listsort$,
$nil$ of arity $\listsort$, $car$ of arity $\listsort\ra\elemsort$
and $cdr$ of arity $\listsort\ra\listsort$.
Its predicate symbols 
are $is_{nil}$ and $is_{cons}$, both of arity $\listsort$.
It is a datatypes signature, with $\elemsorts=\set{\elemsort}$,
$\structsorts=\set{\listsort}$,
$\constructors=\set{nil,cons}$ and
$\selectors=\set{car,cdr}$.
It is often used to model lisp-style linked lists.
$car$ represents the head of the list and
$cdr$ represents its tail.
$nil$ represents the empty list.
$\siglists$ is well-founded as
$T_{\listsort}(\consig{\siglists},\canon{\elemsorts})$ 
includes $nil$.

The signature $\sigpairs$ also has two sorts, $\elemsort$ and $\pairsort$.
Its function symbols are $pair$ of arity $(\elemsort\times\elemsort)\ra\pairsort$
and
$first$ and $second$ of arity $\pairsort\ra\elemsort$.
Its predicate symbol is $is_{pair}$ of arity $\pairsort$.
It is a datatypes signature, with $\elemsorts=\set{\elemsort}$,
$\structsorts=\set{\pairsort}$,
$\constructors=\set{pair}$,
and
$\selectors=\set{first,second}$.
It can be used to model ordered pairs, together with 
projection functions.
It is well-founded as
$T_{\pairsort}(\consig{\sigpairs},\canon{\elemsorts})$ 
is not empty (as $\canon{\elemsorts}$ is not empty).

The signature $\siglistspairs$ has three sorts, $\elemsort$, $\pairsort$
and $\listsort$.
Its function symbols are 
$cons$ of arity $(\pairsort\times\listsort)\ra\listsort$,
$car$ of arity $\listsort\ra\pairsort$,
as well as $nil, cdr, first, second$ with arities as above.
Its predicate symbols are $is_{pair}$, $is_{cons}$ and $nil$, with
arities as above.
It can be used to model lists of ordered pairs.
Similarly to the above signatures, it is a datatypes signature.
\end{example}

Next, we distinguish between finite datatypes (e.g., records) and
inductive datatypes (e.g., lists).

\begin{definition}
\label{def:ind_fin_sort}
A sort $\sigma \in \structsorts$ is {\em finite} if 
$T_\sigma(\consig{\Sigma},\canon{\elemsorts})$ is finite,
and is called {\em inductive} otherwise.
\end{definition}

We denote the set of inductive sorts in $\Sigma$ by $\Ind(\Sigma)$
and the set of its finite sorts by $\Fin(\Sigma)$.
Note that if $\sigma$ is inductive, then according to \Cref{def:trees,def:ind_fin_sort}
we have that 
for any natural number $i$ there exists a natural number $i' > i$ such that 
$T_{\sigma,i'}(\consig{\Sigma},\canon{\elemsorts}) \neq T_{\sigma,i}(\consig{\Sigma},\canon{\elemsorts})$. 
Further, 
for any natural number $d$ and every $\elemsorts$-sorted set $D$
there exists a natural number $i'$ such that
$T_{\sigma,i'}(\consig{\Sigma},D)$ contains an element
whose depth is greater than $d$.

\begin{example}
$\listsort$ is inductive 
in $\siglists$ and $\siglistspairs$.
$\pairsort$ is finite 
in $\sigpairs$ and $\siglistspairs$.
\end{example}

Finally, we define datatypes structures and the theory
of algebraic datatypes.

\begin{definition}
  Let $\Sigma$ be a datatypes signature and $D$ an $\elemsorts$-sorted set.
A $\Sigma$-structure $\Aa$ is said to be a {\em datatypes $\Sigma$-structure generated by $D$} if:
\bi
\item $\sigma^{\Aa} = T_{\sigma}(\consig{\Sigma},D)$ for every sort $\sigma \in \sorts{\Sigma}$,
\item $c^{\Aa}(t_1 \til t_n) = c(t_1 \til t_n)$ for every $c\in\constructors$ of 
arity $(\sigma_{1}\times\ldots\times\sigma_{n})\ra\sigma$ and $t_{1} \in \sigma_{1}^{\Aa} \til t_{n} \in \sigma_{n}^{\Aa}$,
\item $s_{c,i}^{\Aa}( c(t_{1} \til t_{n}) )=t_{i}$ 
for every $c\in\constructors$ of 
arity $(\sigma_{1}\times\ldots\times\sigma_{n})\ra\sigma$, $t_{1} \in \sigma_{1}^{\Aa} \til t_{n} \in\sigma_{n}^{\Aa}$ and $1\leq i\leq n$,
\item $is_{c}^{\Aa}=\set{ c(t_{1} \til t_{n}) \mid t_{1} \in \sigma_{1}^{\Aa} \til t_{n} \in \sigma_{n}^{\Aa} }$ for every $c\in\constructors$ of arity
$(\sigma_{1}\times\ldots\times\sigma_{n})\ra\sigma$.
\ei
$\Aa$ is said to be a {\em datatypes $\Sigma$-structure} if it is a datatypes $\Sigma$-structure
generated by $D$ for some $\elemsorts$-sorted set $D$. 
The {\em $\Sigma$-theory of datatypes}, denoted
$\Ta_{\Sigma}$ is the class of datatypes $\Sigma$-structures.
\end{definition}

Notice that the interpretation of selector functions $s_{c,i}$ 
when applied to terms that are constructed using a constructor
different than $c$ is not fixed and can be set arbitrarily
in datatypes structures, consistently with \smtlib.

\begin{example}
If $\Aa$ is a datatypes $\siglists$-structure then
$\listsort^{\Aa}$ is the set of terms constructed from
$\elemsort^{\Aa}$ and $cons$, plus $nil$.
If $\elemsort^{\Aa}$ is the set of natural numbers, 
then $\listsort^{\Aa}$ contains, e.g.,
$nil$, $cons(1, nil)$, and $cons(1, cons (1, cons (2, nil)))$.
These correspond to the lists $[]$ (the empty list),
$[1]$ and $[1,1,2]$, respectively.

If $\Aa$ is a datatypes $\sigpairs$-structure then
$\pairsort^{\Aa}$ is the set of terms of the form
$pair(a,b)$ with $a,b\in\elemsort^{\Aa}$.
If $\elemsort^{\Aa}$ is again interpreted as the set of natural numbers,
$\pairsort^{\Aa}$ includes, for example, the terms
$pair(1,1)$ and $pair(1,2)$, that correspond to
$(1,1)$ and $(1,2)$, respectively.
Notice that in this case, $\pairsort^{\Aa}$ is an infinite set
even though $\pairsort$ is a finite sort (in terms of \Cref{def:ind_fin_sort}).

Datatypes $\siglistspairs$-structures with the same
interpretation for $\elemsort$ 
include the terms
$nil$, $cons(pair(1,1), nil)$, and 
$cons(pair(1,1), cons(pair(1,2),nil))$
in the interpretation for $\listsort$,
that correspond to $[]$, $[(1,1)]$ and $[(1,1),(1,2)]$, respectively.
If we rename $\elemsort$ in the definition of $\siglists$
to $\pairsort$,
we get that
$\Ta_{\siglistspairs}=\Ta_{\siglists}\oplus\Ta_{\sigpairs}$.
\end{example}

\subsection{Polite Theories}
\label{prelim:polite}

Given two theories $T_1$ and $T_2$, a combination method \`a la
Nelson-Oppen provides a modular way to decide $T_1 \cup T_2$-satisfiability
problems using the satisfiability procedures known for $T_1$ and
$T_2$. Assuming that $T_1$ and $T_2$ have disjoint signatures is not
sufficient to get a complete combination method for the satisfiability
problem. The reason is that $T_1$ and $T_2$ may share sorts, and the equality
symbol on these shared sorts. To be complete, $T_1$ and $T_2$ must agree on the cardinality
of their respective models, and there must be an agreement between
$T_1$ and $T_2$ on the interpretation of shared formulas built over
the equality symbol. These two requirements can be easily fulfilled, based on the following definitions: 

\begin{definition}[Stable Infiniteness]
Given a signature $\Sigma$ and
a set 
$S\suq\sorts{\Sigma}$, we say that a $\Sigma$-theory
$T$ is {\em stably infinite with respect to $S$}
if every quantifier-free $\Sigma$-formula 
that is $T$-satisfiable is also $T$-satisfiable
by a $T$-interpretation $\Aa$ in which $\sigma^{\Aa}$
is infinite for every $\sigma\in S$.
\end{definition}

\begin{definition}[Arrangement]
Let 
$V$ be a finite set of variables whose sorts are in $S$ and  
$\set{V_{\sigma}\st \sigma\in S}$ a partition of $V$ such that $V_{\sigma}$ is the set of variables of sort $\sigma$ in $V$.
We say that a formula $\delta$ is an {\em arrangement of $V$}
if 
$\delta=\bigwedge_{\sigma\in S}(\bigwedge_{(x,y)\in E_{\sigma}}(x=y)\w\bigwedge_{(x,y)\notin E_{\sigma}}(x\neq y))$, where $E_{\sigma}$ is some equivalence relation over $V_{\sigma}$ for each $\sigma\in S$.

\end{definition}

Assume that both $T_1$ and $T_2$ are stably infinite with disjoint
signatures, and let $V$ be the finite set of variables shared by $T_1$
and $T_2$.  Under this assumption, $T_1$ and $T_2$ can agree on an
infinite cardinality, and guessing an arrangement of $V$ suffices to
get an agreement on the interpretation of shared formulas.

In this paper we are interested in an asymmetric disjoint combination
where $T_1$ and $T_2$ are not both stably infinite. In this scenario,
one theory can be arbitrary. As a counterpart, the other theory must
be more than stably infinite: it must be polite, meaning that it is
always possible to increase the cardinality of a model and to have a
model whose cardinality is finite.

In the following we
decompose the politeness definition from \cite{10.1007/11559306_3,JBLPAR}
in order to distinguish between politeness and strong 
politeness (in terms of \cite{Casal2018}) in 
various levels of the definition.
In what follows, $\Sigma$ is an arbitrary (many-sorted) signature, 
$S\suq\sorts{\Sigma}$,
and $T$ is a $\Sigma$-theory.

\begin{definition}[Smooth]
The theory $T$ is \,{\em smooth} w.r.t.\ $S$ if for every quantifier-free formula $\phi$,  $T$-interpretation $\Aa$ that satisfies $\phi$, and function 
$\kappa$ from $S$ to the class of cardinals such that $\kappa(\sigma)\geq\card{\sigma^{\Aa}}$ for every $\sigma\in S$ 
there exists a $\Sigma$-interpretation $\Aa'$ that satisfies $\phi$ with $\card{\sigma^{\Aa'}}=\kappa(\sigma)$ for every $\sigma\in S$.
\end{definition}

In definitions introduced above, as well as below, we often identify
singletons with their single elements when there is no ambiguity 
(e.g., when saying that a theory is smooth w.r.t.\ a sort $\sigma$).

We now introduce some concepts in order to define finite witnessability.
Let $\phi$ be a quantifier-free $\Sigma$-formula and
$\Aa$ a $\Sigma$-interpretation.
We say that $\Aa$ {\em finitely witnesses $\phi$ for $T$ w.r.t.\ $S$} (or, is a {\em finite witness of $\phi$ for $T$ w.r.t.\ $S$}), if $\Aa$ is a $T$-interpretation,
$\Aa\models\phi$, and 
$\sigma^{\Aa}=\fv{\sigma}{\phi}^{\Aa}$
 for every $\sigma\in S$.
We say that $\phi$ is 
{\em  finitely witnessed for $T$ w.r.t.\ $S$} if
it is either $T$-unsatisfiable or it has a finite witness for $T$ w.r.t.\ $S$.
$\phi$ is 
{\em strongly finitely witnessed for $T$ w.r.t.\ $S$} if
$\phi\w\delta_{V}$ is  finitely witnessed for $T$ w.r.t.\ $S$ for every arrangement $\delta_{V}$ of $V$, where $V$ is any set of variables whose sorts are in $S$.
We say that a function $\witness : \qf(\Sigma) \ra \qf(\Sigma)$ is
a {\em (strong) witness for $T$ w.r.t.\ $S$} if for every $\phi\in \qf(\Sigma)$ we have that:
\begin{enumerate*}
\item $\phi$ and $\exists\,\ora{w}.\:\witness(\phi)$ are $T$-equivalent for $\ora{w}=\fv{}{\witness(\phi)}\setminus\fv{}{\phi}$; and
\item $\witness(\phi)$ is (strongly) finitely witnessed for $T$ w.r.t.\ $S$.\footnote{
We note that in practice,
the new variables in $\witness(\phi)$ are assumed to
be fresh not only with respect to $\phi$, but also with respect to the
formula from the second theory being combined.
}
\end{enumerate*}

\begin{definition}[Finitely Witnessable]
The theory $T$ is {\em (strongly) finitely witnessable} w.r.t.\ $S$ if there exists a (strong) witness for $T$ w.r.t.\ $S$ which is computable.
\end{definition}

\begin{definition}[Polite]
$T$ is called {\em (strongly) polite w.r.t.\ $S$} if it is smooth and (strongly) finitely witnessable w.r.t.\ $S$.
\end{definition}

Finally, we
recall the following theorem from \cite{JBLPAR}.

\begin{theorem}[\cite{JBLPAR}]
\label{thm:lpar}
Let $\Sigma_1$ and $\Sigma_2$ be signatures
and let $S=\sorts{\Sigma_1}\cap \sorts{\Sigma_2}$. If
   $T_1$ is a $\Sigma_1$-theory strongly polite w.r.t.\ $S_1\subseteq \sorts{\Sigma_1}$,
     $T_2$ is a $\Sigma_2$-theory strongly polite w.r.t.\ $S_2\subseteq\sorts{\Sigma_2}$, and
    $S\subseteq S_2$,
then $T_1\oplus T_2$ is strongly polite w.r.t.\ 
$S_1\cup(S_2\setminus S)$.
\end{theorem}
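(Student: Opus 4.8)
The plan is to verify the two ingredients of strong politeness separately for $T_1\oplus T_2$ with respect to $S^{\star}:=S_1\cup(S_2\setminus S)$: smoothness and strong finite witnessability. Throughout I would work with conjunctions of flat literals (reducing a general quantifier-free formula to disjunctive normal form and treating each disjunct) and purify each such conjunction into $\phi_1\wedge\phi_2$, where $\phi_i$ is a $\Sigma_i$-formula and the two parts share only variables whose sorts lie in $S$, alien subterms being abstracted by fresh $S$-sorted variables in the usual Nelson--Oppen fashion. Observe first that $S_1$ and $S_2\setminus S$ are disjoint (the latter consists of sorts occurring only in $\Sigma_2$), so any arrangement of $S^{\star}$-sorted variables factors as $\delta_{V_1}\wedge\delta_{V_2}$ over the two blocks $V_1$ (sorts in $S_1$) and $V_2$ (sorts in $S_2\setminus S$); this lets each block be absorbed into the corresponding component.

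For smoothness, given a $(T_1\oplus T_2)$-interpretation $\Aa$ satisfying $\phi$ and a target cardinality function $\kappa$ on $S^{\star}$, I would restrict $\Aa$ to $\Aa^{\Sigma_1}$ and $\Aa^{\Sigma_2}$ and enlarge each component by its own smoothness: $T_1$-smoothness w.r.t.\ $S_1$ to meet $\kappa$ on the $S_1$-block, and $T_2$-smoothness w.r.t.\ $S_2$ to meet $\kappa$ on the $S_2\setminus S$-block. The only coupling is on the shared sorts $S$; here the hypothesis $S\suq S_2$ is essential, since it guarantees that $T_2$ is smooth on every shared sort and can therefore match whatever cardinality the $T_1$-side assigns to a shared sort lying in $S_1$, leaving the remaining shared sorts untouched. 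One then equalizes the carriers of the two enlarged components on the shared sorts by an isomorphism and fuses them into a single $\Sigma_1\cup\Sigma_2$-interpretation $\Aa'$; since $\Aa'$ restricts to a model of each purified part, $\Aa'\models\phi$ with $\card{\sigma^{\Aa'}}=\kappa(\sigma)$ for all $\sigma\in S^{\star}$.

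For strong finite witnessability I would define the combined witness by $\witness(\phi)=\witness_1(\phi_1)\wedge\witness_2(\phi_2)$, keeping the newly introduced variables fresh across the two calls; the required $(T_1\oplus T_2)$-equivalence of $\phi$ with $\exists\,\ora{w}.\,\witness(\phi)$ then follows by composing the two component equivalences through the purification step. To establish the finite-witness property, fix an arrangement $\delta_V$ of a set $V$ of $S^{\star}$-sorted variables and assume $\witness(\phi)\wedge\delta_V$ is $(T_1\oplus T_2)$-satisfiable, say by $\Ba$. I would read off from $\Ba$ the induced arrangement $\delta_W$ of all shared ($S$-sorted) variables, then invoke the strong finite witnessability of $T_2$ on $\witness_2(\phi_2)$ with the arrangement over the $S$- and $(S_2\setminus S)$-sorted variables --- legal precisely because $S\suq S_2$, so $S\cup(S_2\setminus S)=S_2$ --- to obtain a $T_2$-interpretation $\Aa_2$ that is finite and variable-pinned on all of $S_2$ (that is, each element of an $S_2$-sort is the value of a free variable); symmetrically, the strong finite witnessability of $T_1$ on $\witness_1(\phi_1)$ with the arrangement over the $S_1$-sorted variables yields a $T_1$-interpretation $\Aa_1$ that is finite and variable-pinned on $S_1$. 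It then remains to fuse $\Aa_1$ and $\Aa_2$ into a single combined interpretation that is a finite witness for $S^{\star}$.

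The main obstacle is exactly this fusion along the shared sorts. On a shared sort that also lies in $S_1$, both components are minimal and pinned to the same shared variables under the common arrangement $\delta_W$, so their carriers have equal cardinality and can be identified directly. The delicate case is a shared sort in $S\setminus S_1$: here $T_2$ has shrunk it to the finite set of shared-variable values, whereas $T_1$ exercises no control over it, so the two carriers need not match. Since such a sort lies outside $S^{\star}$, the final model is permitted to be infinite there; the remedy is to use smoothness of $T_2$ (again available because $S\suq S_2$) to re-enlarge $\Aa_2$ on these sorts up to the cardinality they have in $\Aa_1$, while holding the cardinalities on $S_2\setminus S$ and on the shared sorts in $S_1$ fixed at their witnessed values. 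After this adjustment the two interpretations agree on every shared sort and can be glued; one finally checks that the glued interpretation is a $(T_1\oplus T_2)$-model of $\witness(\phi)\wedge\delta_V$ whose carriers on $S^{\star}$ are exhausted by the free variables, i.e.\ a finite witness w.r.t.\ $S^{\star}$. Reconciling the two component witnesses on the shared sorts while simultaneously preserving minimality across all of $S^{\star}$ is the technical heart of the argument.
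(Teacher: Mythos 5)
The paper itself does not prove \Cref{thm:lpar}; it imports it from \cite{JBLPAR}, and its own \Cref{rem:composition} characterizes that proof as one that ``involves purification and arrangements.'' Your skeleton (purify into $\phi_1\wedge\phi_2$, run the two component witnesses, fuse the two finite witnesses along the shared sorts) matches the cited proof at a high level, and your smoothness argument is essentially right. But the fusion step, which you correctly identify as the technical heart, has a genuine gap as you have set it up.

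Because your combined witness is $\witness_1(\phi_1)\wedge\witness_2(\phi_2)$ with the two witnesses applied \emph{independently}, each of $\witness_1$ and $\witness_2$ may introduce its own fresh variables of shared sorts $\sigma\in S$, possibly with distinctness forced among them. So on $\sigma\in S\cap S_1$ your claim that the two components are ``pinned to the same shared variables'' and hence have carriers of equal cardinality is false: $\card{\sigma^{\Aa_1}}$ counts the classes of the shared $\sigma$-variables \emph{plus} the fresh $\sigma$-variables of $\witness_1$, while $\card{\sigma^{\Aa_2}}$ counts the shared classes plus the fresh $\sigma$-variables of $\witness_2$, and these can differ in either direction. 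Your repair via smoothness of $T_2$ only \emph{enlarges}, so it works only when the $T_1$ side dominates; when $\card{\sigma^{\Aa_2}}>\card{\sigma^{\Aa_1}}$ you would instead have to enlarge $\Aa_1$ on $\sigma\in S_1$, which destroys exactly the pinnedness ($\sigma^{\Aa}=\fv{\sigma}{\cdot}^{\Aa}$) that finite witnessability w.r.t.\ $S^{\star}$ demands. The same direction problem recurs on $\sigma\in S\setminus S_1$, and there you have a second difficulty: you hand $T_1$ only an arrangement of $S_1$-sorted variables, so $\Aa_1$ need not respect $\delta_W$ on shared variables of sort $\sigma\in S\setminus S_1$ at all, and the two models may disagree on equalities between shared variables, making any gluing impossible. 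The proof in \cite{JBLPAR} avoids both problems by building the arrangement-guessing into the combined witness itself and by \emph{sequencing} the two witnesses: apply $\witness_2$ first, arrange \emph{all} shared-sort variables including the fresh ones $\witness_2$ created, feed that arrangement to the $T_1$ side before applying $\witness_1$, so that on every shared sort the $T_1$-side carrier is guaranteed at least as large as the $T_2$-side carrier; then smoothness of $T_2$ (legal since $S\suq S_2$) closes the gap upward, and the $T_1$-side variables keep the fused domain pinned on $S\cap S_1$. Without this asymmetry and the arrangements over the fresh shared-sort variables, your construction cannot control the cardinality comparison on which the fusion depends.
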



\section{Additive Witnesses}
\label{secweakstrong}
%
It was shown in \cite{JBLPAR} that 
politeness is not sufficient for the proof of the polite combination method
from \cite{10.1007/11559306_3}.  
Strong politeness was introduced to fix the problem.
It is unknown, however, whether there are theories that are polite
but not strongly polite.
In this section we 
offer a simple (yet useful) criterion for the equivalence
of the two notions.
Throughout this section, unless stated otherwise, $\Sigma$ and $S$ denote an
arbitrary signature and a subset of its set of sorts, and
$T,T_{1},T_{2}$ denote arbitrary
$\Sigma$-theories.

The following example, which is based on \cite{JBLPAR} 
using notions of the current paper,
shows that
the strong and non-strong witnesses are different.
Let $\Sigma_{0}$ be a signature with a single sort $\sigma$ 
and no function or predicate symbols (except $=_{\sigma})$, 
 $T_{0}$ the $\Sigma_{0}$-theory
consisting of all $\Sigma_{0}$-structures $\Aa$ with $\card{\sigma^{\Aa}}\geq 2$,
$\phi$ the formula $x=x\land w=w$,
and
$\delta$ the arrangement $(x=w)$ of $\set{x,w}$.
Then $\phi\w\delta$ is $T_{0}$-satisfiable, but every interpretation
$\Aa$ with $\sigma^{\Aa}=\set{x,w}^{\Aa}$ that satisfies it has
only one element in $\sigma^{\Aa}$ and so $\phi$ is not
strongly finitely witnessed for $T_{0}$ w.r.t.\ $\sigma$.
It is straightforward to show, however, that $\phi$ is
finitely witnessed for $T_{0}$ w.r.t.\ $\sigma$.
Moreover, the function $\witness$ defined by
$\witness(\phi)=(\phi\ \w\ w_{1}=w_{1}\ \w\ w_{2}=w_{2})$ for fresh $w_{1},w_{2}$
is a  witness for $T_{0}$ w.r.t.\ $\sigma$, but not a strong one.
This does not show, however, that $T_{0}$ is not strongly polite.
In fact, it is indeed strongly polite 
since the function
$\witness'(\phi)=\phi\w w_{1}\neq w_{2}$ for fresh $w_{1},w_{2}$ is
a strong witness for $T_{0}$ w.r.t.\ $\sigma$.

We introduce 
the notion of additivity, which ensures
that the witness is able to ``absorb" arrangements and thus
lift politeness to strong politeness.

\begin{definition}[Additivity]
\label{additiveDef}
Let $f:\qf(\Sigma)\ra \qf(\Sigma)$.
We say that $f$ is $S$-additive for $T$ if $f(f(\phi)\w\fe)$ and $f(\phi)\w\fe$ 
are $T$-equivalent and have the same set of $S$-sorted variables
for every $\phi,\fe\in \qf(\Sigma)$, provided that 
$\varphi$ is a conjunction of flat literals such that
every term in $\varphi$ is a variable whose sort is in $S$.
When $T$ is clear from the context, we just say that $f$ is $S$-additive.
We say that $T$ is {\em additively} finitely witnessable w.r.t.\ $S$ if there exists a witness for $T$ w.r.t.\ $S$ which is both computable and $S$-additive.
$T$ is said to be {\em additively polite w.r.t.\ $S$} if it is smooth and
additively finitely witnessable w.r.t.\ $S$.
\end{definition}

\begin{proposition}
\label{additive}
Let $\witness$ be a  witness for $T$ w.r.t.\ $S$.
If $\witness$ is $S$-additive 
then it is a strong witness for $T$ w.r.t.\ $S$.\begin{conf}\footnote{Due to lack of space, some proofs have been omitted. They can be found in an extended version at \url{https://arxiv.org/abs/2004.04854}.}\end{conf}\begin{report}\footnote{The omitted proofs can be found in appendix.}\end{report}

\end{proposition}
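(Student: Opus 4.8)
The plan is to notice that, since $\witness$ is already a witness for $T$ w.r.t.\ $S$, the first requirement in the definition of a \emph{strong} witness (that $\phi$ and $\exists\,\ora{w}.\,\witness(\phi)$ are $T$-equivalent, for $\ora{w}=\fv{}{\witness(\phi)}\setminus\fv{}{\phi}$) holds verbatim, because condition~(1) is literally the same for witnesses and strong witnesses. Hence the entire content of the proposition is to upgrade ``finitely witnessed'' to ``strongly finitely witnessed''. Concretely, I would fix an arbitrary $\phi\in\qf(\Sigma)$, an arbitrary set $V$ of variables whose sorts lie in $S$, and an arbitrary arrangement $\delta_{V}$ of $V$, and show that $\witness(\phi)\w\delta_{V}$ is finitely witnessed for $T$ w.r.t.\ $S$.

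The key observation is that every arrangement $\delta_{V}$ is exactly of the shape that additivity is permitted to absorb: by its definition it is a conjunction of flat literals of the forms $x=y$ and $x\neq y$ in which every term is a variable of a sort in $S$. I would therefore instantiate the $S$-additivity hypothesis with the given $\phi$ and with $\fe=\delta_{V}$, obtaining that $\witness(\witness(\phi)\w\delta_{V})$ and $\witness(\phi)\w\delta_{V}$ are $T$-equivalent and share the same set of $S$-sorted variables.

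Next I would apply the fact that $\witness$ is a (plain) witness to the formula $\witness(\phi)\w\delta_{V}$: its image $\witness(\witness(\phi)\w\delta_{V})$ is finitely witnessed for $T$ w.r.t.\ $S$, so it is either $T$-unsatisfiable or has a finite witness. In the first case, $T$-equivalence immediately gives that $\witness(\phi)\w\delta_{V}$ is $T$-unsatisfiable, hence finitely witnessed. In the second case, let $\Aa$ be a finite witness of $\witness(\witness(\phi)\w\delta_{V})$. Then $\Aa\models\witness(\witness(\phi)\w\delta_{V})$, and $T$-equivalence yields $\Aa\models\witness(\phi)\w\delta_{V}$; moreover, since the two formulas carry the same $S$-sorted variables, we have $\fv{\sigma}{\witness(\witness(\phi)\w\delta_{V})}=\fv{\sigma}{\witness(\phi)\w\delta_{V}}$ for every $\sigma\in S$, so the defining equation $\sigma^{\Aa}=\fv{\sigma}{\witness(\witness(\phi)\w\delta_{V})}^{\Aa}$ of a finite witness transfers to $\sigma^{\Aa}=\fv{\sigma}{\witness(\phi)\w\delta_{V}}^{\Aa}$. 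Thus $\Aa$ is itself a finite witness of $\witness(\phi)\w\delta_{V}$. In either case $\witness(\phi)\w\delta_{V}$ is finitely witnessed, and since $\delta_{V}$ and $V$ were arbitrary, $\witness(\phi)$ is strongly finitely witnessed, so $\witness$ is a strong witness.

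I expect the only delicate point to be the bookkeeping of free variables. The finite-witness condition constrains only the $S$-sorted variables and the cardinalities of the $S$-sorts, so additivity's guarantee that the two formulas carry the same $S$-sorted variables is precisely what makes the transfer of the defining equation go through; what remains to check carefully is that $T$-equivalence genuinely carries the satisfaction of $\Aa$ across the two formulas (unproblematic once one confirms $\Aa$ interprets all relevant variables) and that the arrangement really meets the flatness precondition of \Cref{additiveDef}, which it does by construction.
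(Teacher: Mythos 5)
Your proof is correct and takes essentially the same route as the paper's: instantiate $S$-additivity with $\fe=\delta_{V}$ (an arrangement meets the flat-literal precondition by construction), apply the plain witness property to $\witness(\phi)\w\delta_{V}$, and transfer the resulting finite witness back across the $T$-equivalence using the guarantee that both formulas have the same $S$-sorted variables. Your explicit unsatisfiable case is merely the contrapositive of the paper's opening assumption that $\witness(\phi)\w\delta_{V}$ is $T$-satisfiable, so the two arguments coincide.
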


\begin{corollary}
\label{addfinwit}
Suppose $T$ is additively polite w.r.t.\ $S$. Then it is strongly polite w.r.t.\ $S$.
\end{corollary}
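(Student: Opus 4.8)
The plan is to unfold the definition of additive politeness and apply \Cref{additive} directly; the corollary is essentially immediate once the definitions are lined up. First I would observe that both ``additively polite'' and ``strongly polite'' require the theory to be smooth w.r.t.\ $S$, so the smoothness half of the conclusion is inherited verbatim from the hypothesis and needs no further work. The only remaining task is to upgrade additive finite witnessability to strong finite witnessability.

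To do that, I would invoke the hypothesis that $T$ is additively finitely witnessable w.r.t.\ $S$: by \Cref{additiveDef} this yields a function $\witness$ that is simultaneously a witness for $T$ w.r.t.\ $S$, computable, and $S$-additive. \Cref{additive} then applies to $\witness$ verbatim, since its hypotheses are precisely that $\witness$ is a witness for $T$ w.r.t.\ $S$ and is $S$-additive; its conclusion is that $\witness$ is in fact a strong witness for $T$ w.r.t.\ $S$. Because $\witness$ was already assumed computable, it is a \emph{computable} strong witness, which is exactly what the definition of (strong) finite witnessability demands. Hence $T$ is strongly finitely witnessable w.r.t.\ $S$.

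Combining the inherited smoothness with strong finite witnessability yields strong politeness w.r.t.\ $S$ by the definition of ``strongly polite,'' completing the argument. I do not anticipate any obstacle here: all of the mathematical content resides in \Cref{additive}, and this corollary merely repackages that proposition, adding only the trivial bookkeeping that smoothness is shared between the two notions of politeness and that computability of the witness is preserved when we reinterpret it as a strong witness.
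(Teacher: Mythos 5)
Your proof is correct and takes exactly the route the paper intends: the corollary is an immediate consequence of \Cref{additive}, obtained by unfolding the definition of additive politeness, noting that smoothness is shared by both notions and that computability of the witness is preserved when it is promoted to a strong witness. The paper gives no separate proof precisely because the argument is this direct repackaging of \Cref{additive}.
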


The theory $T_{0}$ from the example above is additively finitely witnessable w.r.t.\ $\sigma$, even though $\witness'$ is not $\sigma$-additive.  
Indeed, it is possible to define a new witness for $T_{0}$ w.r.t.\ $\sigma$, say $\witness''$,  which is $\sigma$-additive. This function $\witness''$ is defined by:
$\witness''(\phi)=
\witness'(\phi)$ if $\phi$ is a conjunction that includes some disequality $x\neq y$
for some $x,y$. Otherwise, $\witness''(\phi)=\phi$.

$T_{0}$ is an {\em existential theory}: it consists of all the structures
that satisfy an existential sentence (in this case, $\exists x,y\:.\:x\neq y$).
The construction of $\witness''$ can be generalized to any existential theory.
Such theories are also smooth w.r.t.\ any set of sorts
and so existential theories are additively polite.

The notion of additive witnesses is useful for proving that a polite theory is strongly polite.
In particular, the witnesses for the theories of equality, arrays, sets and multisets from \cite{10.1007/11559306_3} are all additive, and so
strong politeness of these theories follows from their politeness.
The same will hold later, when we conclude strong politeness of theories
of algebraic datatypes from their politeness.

\section{Politeness for the SMT-LIB 2 Theory of Datatypes}
\label{sec:politedt}

Let $\Sigma$ be a datatypes signature with
$\sorts{\Sigma}=\elemsorts\uplus\structsorts$ and
$\func{\Sigma}=\constructors\uplus\selectors$.
In this section, we prove that $\Ta_{\Sigma}$ is strongly polite with respect to $\elemsorts$. 
In \Cref{sec:inf_IDT}, we consider theories with only inductive
sorts, and consider theories with only finite sorts in \Cref{sec:finite_IDT}. 
We combine them in \Cref{sec:combined_IDT}, where arbitrary theories
of datatypes are considered.
This separation is only needed for finite witnessability.
For smoothness, however, 
it is straightforward to show that the $\elemsorts$ domain
of a given interpretation can always be augmented without
changing satisfiability of quantifier-free formulas.
\begin{lemma}
\label{lem:DT_smooth}
        $\Ta_{\Sigma}$ is smooth w.r.t.\ $\elemsorts$.
\end{lemma}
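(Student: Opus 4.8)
The plan is to enlarge only the element-sort domains of the given model while leaving the rest of its structure intact. First I would note that any $\Ta_{\Sigma}$-interpretation $\Aa$ satisfying $\phi$ is an extension, to the free variables of $\phi$, of some datatypes $\Sigma$-structure generated by an $\elemsorts$-sorted set $D$; moreover, since element sorts carry no constructors, $\sigma^{\Aa}=T_{\sigma}(\consig{\Sigma},D)=D_{\sigma}$ for every $\sigma\in\elemsorts$. Given $\kappa$ with $\kappa(\sigma)\geq\card{\sigma^{\Aa}}$ for each $\sigma\in\elemsorts$, I would choose for every $\sigma\in\elemsorts$ a superset $D'_{\sigma}\supseteq D_{\sigma}$ with $\card{D'_{\sigma}}=\kappa(\sigma)$ --- possible precisely because $\kappa(\sigma)\geq\card{D_{\sigma}}$ --- taking the $D'_{\sigma}$ pairwise disjoint so that $D'$ is a legitimate $\elemsorts$-sorted set.

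I would then let the variable-free part of $\Aa'$ be the datatypes $\Sigma$-structure generated by $D'$. Since $D\subseteq D'$, we have $T_{\sigma}(\consig{\Sigma},D)\subseteq T_{\sigma}(\consig{\Sigma},D')$ for every sort, so every element of $\Aa$ --- in particular every value $x^{\Aa}$ of a free variable of $\phi$ --- persists in $\Aa'$; I would accordingly interpret the variables in $\Aa'$ exactly as in $\Aa$. The only interpretive freedom left in a datatypes structure is the value of a selector $s_{c,i}$ on an argument not headed by $c$, and I would fix these in $\Aa'$ to coincide with $\Aa$ on all old arguments (choosing them arbitrarily on the genuinely new ones). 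By the generated-structure clause, $\sigma^{\Aa'}=D'_{\sigma}$ for $\sigma\in\elemsorts$, so $\card{\sigma^{\Aa'}}=\kappa(\sigma)$, as required.

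It remains to verify $\Aa'\models\phi$, and this is the only step that needs care. I would prove by structural induction that $\alpha^{\Aa'}=\alpha^{\Aa}$ for every subterm $\alpha$ occurring in $\phi$: variables agree by the choice of variable interpretation; constructor applications and selectors applied to correctly-headed arguments agree because both structures interpret them canonically; and selectors applied to wrongly-headed arguments agree because we matched $\Aa$ exactly on old arguments, while every value arising from $\phi$ under $\Aa$ is an old element. With term values preserved, every literal of $\phi$ retains its truth value, so $\Aa'\models\phi$. The main obstacle is precisely this agreement argument --- ensuring that the arbitrary selector-on-wrong-constructor behavior of the two structures can be made to coincide on all terms occurring in $\phi$ --- but since $\Aa$ only ever feeds old elements into those selectors, extending its choices to $D'$ suffices.
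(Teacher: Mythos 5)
Your proof is correct and takes essentially the same approach as the paper, which treats this lemma as straightforward and notes only that the $\elemsorts$ domains can be augmented without changing satisfiability of quantifier-free formulas --- precisely your construction of the datatypes structure generated by the enlarged $D'$, with variable values kept and selectors extended to agree with $\Aa$ on old arguments. One small point to make explicit: tester atoms $is_c(x)$ also retain their truth values, since $is_c^{\Aa'}$ agrees with $is_c^{\Aa}$ on elements of the old domains (subtrees of a tree over $D$ are themselves trees over $D$), which your term-value induction alone does not literally cover.
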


\Cref{lem:DT_smooth} holds for any datatypes signature.

\subsection{Inductive datatypes}
\label{sec:inf_IDT}

In this section, we assume 
that all sorts in $\structsorts$ are inductive. 

To prove finite witnessability, we now introduce an additive witness function.
Following arguments
from \cite{10.1007/11559306_3}, 
it suffices to define the witness only for conjunctions of flat literals.
A complete witness can then use the restricted one
by first transforming the input formula
to flat DNF form and then creating
a disjunction where each disjunct
is the result of applying the witness on
the corresponding disjunct.
Similarly, it suffices to show that $\witness(\phi)$ is finitely witnessed
for $\phi$ which is a conjunction of flat literals.
Essentially, our witness 
guesses possible constructors for variables whose constructors
are not explicit in the input formula.

\begin{definition}[A Witness for $\Ta_{\Sigma}$]
\label{def:IDT_witness}
        Let $\phi$ be a quantifier-free conjunction of flat $\Sigma$-literals.
        $\iwitness(\phi)$ is obtained from $\phi$ by performing the following steps:
\begin{enumerate}
\item\label{it:selectors} For any literal of the form $y = s_{c,i}(x)$ such
that 
 $x=c(\ora{u_{1}}, y, \ora{u_{2}})$
does not occur in $\phi$ and
$x=d(\ora{u_d})$ does not occur in $\phi$
for any $\ora{u_{1}},\ora{u_{2}},\ora{u_d}$,
we conjunctively add
 $x=c(\ora{u_{1}}, y, \ora{u_{2}}) \lor (\bigvee_{d\neq c}
x=d(\ora{u_d}))$
with fresh 
$\ora{u_{1}},
\ora{u_{2}},
\ora{u_d}$,
where $c$ and $d$ range over $\constructors$.
\item\label{it:tester1} For any literal of the form $is_c(x)$
such that $x=c(\ora{u})$
does not occur in $\phi$ for any $\ora{u}$,
we conjunctively add 
$x=c(\ora{u})$ with fresh $\ora{u}$.
\item\label{it:testers} For any literal of the form $\neg is_c(x)$
such that 
$x=d(\ora{u_{d}})$ does not occur in $\phi$ for
any $d\neq c$ and  $\ora{u_{d}}$,
we conjunctively add
  $\bigvee_{d\neq c}x=d(\ora{u_{d}})$,
with fresh $\ora{u_{d}}$.
\item\label{it:identity} For any sort $\sigma\in \elemsorts$ 
such that $\phi$ does not include a variable of sort $\sigma$
we conjunctively add a literal $x=x$ for a fresh variable $x$ of sort $\sigma$. 
\end{enumerate}
\end{definition}

\begin{example}
Let $\phi$ be the $\siglists$-formula
$y=cdr(x)\w y'=cdr(x)\w is_{cons}(y)$.
$\iwitness(\phi)$ 
is 
$\phi\w (x=nil\vee x=cons(e,y))\w(x=nil\vee x=cons(e',y'))\w y=cons(e'',z)\w e'''=e'''$
where $e,e',e'',e''',z$ are fresh.
\end{example}

In \Cref{def:IDT_witness},
\Cref{it:selectors} guesses the constructor of the argument
for the selector.
\Cref{it:tester1,it:testers} correspond to the semantics of testers.
\Cref{it:identity} is meant to ensure that we can construct
a finite witness with non-empty domains.
The 
requirement for absence of literals before adding
literals or disjunctions to $\phi$
is used to ensure
additivity of $\iwitness$. And indeed:

\begin{lemma}
\label{lem:IDT_additive}
        $\iwitness$ is $\elemsorts$-additive.
\end{lemma}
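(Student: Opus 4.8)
The plan is to show that a second application of $\iwitness$, performed after conjoining an $\elemsorts$-variable formula $\fe$, triggers none of the four steps of \Cref{def:IDT_witness}, so that the outer application behaves as the identity map (modulo the variable-preserving flat-DNF rewriting through which $\iwitness$ is lifted from conjunctions of flat literals to arbitrary quantifier-free formulas). Once this is established, both requirements of \Cref{additiveDef} follow at once: an identity map preserves logical meaning, hence $\iwitness(\iwitness(\phi)\w\fe)$ and $\iwitness(\phi)\w\fe$ are $\Ta_{\Sigma}$-equivalent, and it introduces no fresh variables, hence the two formulas carry exactly the same set of $\elemsorts$-sorted variables.

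First I would pin down the shape of the admissible $\fe$. Since every term of $\fe$ is a variable of a sort in $\elemsorts$ and its literals are flat, $\fe$ is a conjunction of equalities and disequalities between $\elemsort$-sorted variables only. In particular $\fe$ contains no selector application, no tester, and no negated tester, so conjoining it creates no literal matching the trigger patterns of Steps~\ref{it:selectors}--\ref{it:testers}; moreover it can only add variables of sorts in $\elemsorts$, so it can only help cover sorts relevant to Step~\ref{it:identity}, never uncover one.

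The heart of the argument is that the ``absence'' guards built into Steps~\ref{it:selectors}--\ref{it:testers} are exactly what make the second pass inert, which I would verify disjunct by disjunct after putting $\iwitness(\phi)\w\fe$ into flat DNF. For a selector literal $y=s_{c,i}(x)$ that fired Step~\ref{it:selectors} on the first pass, $\iwitness(\phi)$ contains $x=c(\ora{u_1},y,\ora{u_2})\v\bigvee_{d\neq c}x=d(\ora{u_d})$; distributing this disjunction over the DNF places a constructor equation $x=d'(\dots)$ for some $d'$ in every disjunct, so the guard ``$x=d(\dots)$ occurs for no $d$'' fails and Step~\ref{it:selectors} does not fire again. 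Likewise a literal $is_c(x)$ that fired Step~\ref{it:tester1} leaves the conjunct $x=c(\ora{u})$ present in every disjunct, and a literal $\neg is_c(x)$ that fired Step~\ref{it:testers} leaves some $x=d(\ora{u_d})$ with $d\neq c$ present in every disjunct, so in each case the matching guard is now falsified. Literals that did not fire initially still do not fire, since their guards were already satisfied by a constructor equation already present in $\phi$, which survives into every disjunct. Finally, because $\iwitness(\phi)$ already conjoins an $x=x$ for every $\elemsort$-sort absent from $\phi$, every sort in $\elemsorts$ is represented in each disjunct, so Step~\ref{it:identity} never fires on the second pass and no new $\elemsorts$-sorted variable is introduced.

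I expect the main obstacle to be the bookkeeping of how the disjunctions introduced by Steps~\ref{it:selectors} and~\ref{it:testers} interact with the flat-DNF form: one must check that the fresh constructor equations really match, syntactically, the patterns the guards test against — including that $y$ occupies precisely the $i$-th argument position of $c$ in Step~\ref{it:selectors} — and that these newly added constructor equations contain no selector or tester subterms, so that they spawn no further triggers of their own. (The case of several selector literals sharing the same $x$ is a harmless special case: it may yield $\Ta_{\Sigma}$-unsatisfiable disjuncts, but these affect neither $\Ta_{\Sigma}$-equivalence nor the variable set.) Once this syntactic matching is confirmed in every disjunct, the second application of $\iwitness$ is the identity up to DNF, and $\elemsorts$-additivity follows.
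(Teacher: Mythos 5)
Your proof is correct and follows essentially the same route as the paper's: both lift $\iwitness$ through the flat-DNF transformation, observe that each resulting disjunct already contains a constructor equation for every variable whose selector or tester literal fired (and an $\elemsorts$-variable for every sort), so that every guard of \Cref{def:IDT_witness} fails and the second application is the identity on each disjunct, yielding both $\Ta_{\Sigma}$-equivalence and equality of the $\elemsorts$-variable sets. Your explicit attention to the syntactic matching of the guessed constructor equations (including $y$ in the $i$-th position) and to the harmless unsatisfiable disjuncts arising from several selectors on the same variable matches, and slightly sharpens, the paper's brief justification.
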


Further, it can be verified that:
\begin{lemma}
  \label{lem:equivreq}
    Let $\phi$ be a conjunction of flat literals.
    $\phi$ and $\exists \ora{w}\:.\:\Gamma$
    are $\Ta_{\Sigma}$-equivalent,
    where 
  $\Gamma=\iwitness(\phi)$ and
    $\ora{w}=\fv{}{\Gamma}\setminus\fv{}{\phi}$. 
\end{lemma}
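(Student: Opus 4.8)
The plan is to establish the two halves of the $\Ta_\Sigma$-equivalence separately. Write $\Gamma=\phi\w\psi$, where $\psi$ is the conjunction of all literals and disjunctions added in Steps~\ref{it:selectors}--\ref{it:identity} of \Cref{def:IDT_witness}, and note that every variable of $\ora{w}$ occurs only in $\psi$. The easy half shows that every $\Ta_\Sigma$-interpretation $\Aa$ satisfying $\exists\ora{w}\:.\:\Gamma$ satisfies $\phi$: some extension $\Aa'$ of $\Aa$ to $\ora{w}$ satisfies $\Gamma$, hence the conjunct $\phi$, and since $\phi$ does not mention $\ora{w}$ we get $\Aa\models\phi$. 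This half is purely syntactic and uses no facts about datatypes structures.

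The substantive half shows that every $\Ta_\Sigma$-interpretation $\Aa$ with $\Aa\models\phi$ extends on $\ora{w}$ to a $\Ta_\Sigma$-interpretation satisfying $\Gamma$. Since I only add interpretations of variables, the extension is again a $\Ta_\Sigma$-interpretation, and it satisfies $\phi$ because $\phi$ avoids $\ora{w}$; so it suffices to satisfy $\psi$. The one structural fact I will use repeatedly is read off from the definition of datatypes structures: for $\sigma\in\structsorts$ we have $\sigma^\Aa=T_\sigma(\consig{\Sigma},D)$ with $T_{\sigma,0}(\consig{\Sigma},D)=\emptyset$, so every element of a struct sort equals $e(t_1\til t_k)$ for exactly one constructor $e$; moreover every elem sort and every struct sort has a non-empty domain (the former because $D$ is a sorted set, the latter by the well-foundedness requirement of \Cref{def:IDT_sig}).

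I then treat the four steps independently, choosing the values of the fresh variables each introduces. For a literal $y=s_{c,i}(x)$ (Step~\ref{it:selectors}) I branch on the head constructor $e$ of $x^\Aa$: if $e=c$, say $x^\Aa=c(t_1\til t_n)$, then $s_{c,i}^\Aa(x^\Aa)=t_i=y^\Aa$, so setting $\ora{u_1},\ora{u_2}$ to $t_1\til t_{i-1},t_{i+1}\til t_n$ satisfies the disjunct $x=c(\ora{u_1},y,\ora{u_2})$; if $e=d\neq c$ I assign $\ora{u_d}$ to the arguments of $x^\Aa$ and satisfy $x=d(\ora{u_d})$. For $is_c(x)$ (Step~\ref{it:tester1}), membership $x^\Aa\in is_c^\Aa$ forces $x^\Aa=c(t_1\til t_n)$, and I set $\ora{u}$ to $t_1\til t_n$. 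For $\neg is_c(x)$ (Step~\ref{it:testers}), $x^\Aa$ is headed by some $d\neq c$, which satisfies the corresponding disjunct. Step~\ref{it:identity} is satisfied by assigning the fresh variable any element of the non-empty domain of its sort. Fresh variables occurring in disjuncts that I did not select receive arbitrary domain elements.

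The main obstacle, and essentially the only place needing care, is Step~\ref{it:selectors}: I must check that the first disjunct is genuinely reconstructible when $x$ is headed by $c$, which relies on $s_{c,i}$ acting as the true $i$-th projection on $c$-terms (so that $y^\Aa=t_i$ and $x^\Aa=c(t_1\til t_{i-1},y^\Aa,t_{i+1}\til t_n)$), and that the fresh variables of the non-selected disjuncts can always be assigned even though they constrain nothing---this uses non-emptiness of all domains. A secondary subtlety arises in Step~\ref{it:testers} when $c$ is the unique constructor of its sort: the added disjunction is then empty, i.e.\ equivalent to falsehood, but in that case $\neg is_c(x)$ is already $\Ta_\Sigma$-unsatisfiable, so no $\Aa\models\phi$ exists and the half holds vacuously.
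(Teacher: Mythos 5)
Your proof is correct and takes essentially the same route as the paper's: one direction is purely syntactic, and the other is a case analysis over the four steps of \Cref{def:IDT_witness} showing each added conjunct is entailed (after existentially quantifying the fresh variables) by the corresponding literal of $\phi$, using the projection semantics of selectors, the semantics of testers, and the fact that every $\structsorts$-element has a unique head constructor. Your explicit handling of the empty disjunction in Step~\ref{it:testers} and of fresh variables in unselected disjuncts (via non-emptiness of all domains) spells out details the paper leaves implicit, but does not change the argument.
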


The remainder of this section is dedicated to the proof of the
following lemma:

\begin{lemma}[Finite Witnessability]
\label{lem:IDT_witness}
        Let $\phi$ be a conjunction of flat literals.
        Then, $\Gamma=\iwitness(\phi)$ is finitely witnessed for $\Ta_{\Sigma}$ with respect to $\elemsorts$.
\end{lemma}

Suppose that $\Gamma$ is $\Ta_{\Sigma}$-satisfiable, and
let $\Aa$ be a satisfying $\Ta_{\Sigma}$-interpretation.
We define a $\Ta_{\Sigma}$-interpretation $\Ba$ as follows,
and then show that $\Ba$ is a  finite witness of $\Gamma$ for $\Ta_{\Sigma}$ w.r.t.\ $\elemsorts$.
%
First for every $\sigma\in\elemsorts$ we set
$\sigma^{\Ba}=\fv{\sigma}{\Gamma}^\Aa$, and
for every variable $e\in\fv{\sigma}{\Gamma}$, we set
$e^{\Ba}=e^{\Aa}$.
The interpretations of $\structsorts$-sorts,
testers and constructors are uniquely determined by the theory.
It is left to define the interpretation of $\structsorts$-variables in $\Ba$,
as well as the interpretation of the selectors (the 
interpretation of selectors is fixed by the theory only when
applied to the ``right" constructor).
We do this in several steps:

\smallskip
\noindent
{\bf Step 1 -- Simplifying $\Gamma$:}
since $\phi$ is a conjunction of flat literals,
$\Gamma$ is a conjunction
whose conjuncts are either flat literals or
disjunctions of flat literals (introduced
in \Cref{
it:selectors,it:testers} of \Cref{def:IDT_witness}).
Since $\Aa\models\Gamma$, $\Aa$ satisfies exactly one disjunct of each
such disjunction.
We can thus obtain a formula $\Gamma_{1}$
from $\Gamma$ by replacing every 
disjunction with the disjunct that is satisfied by $\Aa$.
Notice that $\Aa\models\Gamma_{1}$ and that it is a conjunction of flat literals.
Let $\Gamma_{2}$ be obtained from $\Gamma_{1}$ by removing any literal
of the form $is_{c}(x)$ and any literal of the form $\neg is_{c}(x)$.
Let $\Gamma_{3}$ be obtained from $\Gamma_{2}$ by removing any literal of the
form $x=s_{c,i}(y)$.  For convenience, we denote $\Gamma_{3}$ by $\Gamma'$.
Obviously, $\Aa\models\Gamma'$, and $\Gamma'$ is a conjunction of flat literals
without selectors and testers.

\smallskip
\noindent
{\bf Step 2 -- Working with Equivalence Classes:}
We would like to preserve equalities between $\structsorts$-variables
from $\Aa$.
To this end, we group all variables in $\fv{}{\Gamma}$ to equivalence classes
according to their interpretation in $\Aa$.
Let    
$\equiv_{\Aa}$ denote an equivalence relation over $\fv{}{\Gamma}$ such that
    $x\equiv_{\Aa}y$ iff $x^{\Aa}=y^{\Aa}$.
    We denote by $[x]$ the equivalence 
    class of $x$. 
Let $\alpha$ be an equivalence class, thus
$\alpha^{\Aa}=\set{x^{\Aa}\mid x\in\alpha}$ is a singleton.
Identifying this singleton with its only element, we have that
$\alpha^{\Aa}$ denotes $a^{\Aa}$ for 
an arbitrary element
$a$ of the equivalence class $\alpha$.

\smallskip
\noindent
{\bf Step 3 -- Ordering Equivalence Classes:}
We would also like to preserve disequalities between $\structsorts$-variables
from $\Aa$. Thus we introduce a relation 
     $\prec$ over the equivalence classes,
such that
    $\alpha \prec \beta$
    if $y = c(w_1\til w_n)$ 
    occurs as one of the conjuncts in $\Gamma'$ for some 
    $w_{1}\til w_n$ and $c$ such that
    $w_k \in \alpha$ for some $y \in \beta$,  $c\in\constructors$, and $k$.
    Call an equivalence class $\alpha$ {\em nullary} if
    $\Aa\models is_c(x)$ for some $x\in \alpha$ and nullary constructor $c$.
    Call an equivalence class $\alpha$ {\em minimal} 
    if $\beta \not\prec \alpha$ for every $\beta$.
Notice that each nullary equivalence class is minimal.
The relation $\prec$ induces a directed acyclic graph (DAG), denoted $G$.
The vertices are the equivalence classes. Whenever $\alpha \prec \beta$, we draw an edge from vertex $\alpha$ to $\beta$.
%
%
%

\smallskip
\noindent
{\bf Step 4 -- Interpretation of Equivalence Classes:}
We define $\alpha^{\Ba}$ for every
equivalence class $\alpha$.
Then, $x^{\Ba}$ is simply defined as $[x]^{\Ba}$, for every
$\structsorts$-variable $x$.
The idea goes as follows. 
Nullary classes are assigned according to $\Aa$.
Other minimal classes are assigned arbitrarily, 
but it is important to assign different classes
to terms whose depths are far enough from each other to
ensure that the disequalities in $\Aa$ are preserved.
Non-minimal classes are uniquely determined after minimal ones are
assigned.
Formally,
let $m$ be the number of equivalence classes, 
$l$ the number of minimal equivalence classes,
$r$ the number of nullary equivalence classes,
and
$\alpha_{1}\til \alpha_{m}$ a topological sort of $G$,
such that all minimal classes occur before all others, 
and the
first $r$ classes are nullary.
Let
$d$ be the length of the longest path in $G$.
We define $\alpha_{i}^{\Ba}$ by induction on $i$.
In the definition,
we use $\Ba_{\elemsorts}$ to denote
the $\elemsorts$-sorted
set assigning $\sigma^{\Ba}$ to every $\sigma\in\elemsorts$.
\begin{enumerate}
\item\label{it:nullary} If $0<r$ and $i\leq r$
then $\alpha_{i}$ is a nullary class and so we set
$\alpha_i^{\Ba}=\alpha_i^{\Aa}$.


\item\label{it:minimal_rest} 
If $r<i\leq l$ then $\alpha_{i}$ is minimal and not nullary.
Let $\sigma$ be the sort of variables in $\alpha_{i}$.
If
$\sigma\in\elemsorts$, then all variables in the class
have already been defined.
Otherwise, $\sigma\in\structsorts$.
In this case,
 we define $\alpha_{i}^{\Ba}$ 
to be an arbitrary element of 
$T_{\sigma}(\consig{\Sigma}, \Ba_{\elemsorts})$ that has
depth strictly greater than 
$\max\set{\depth(\alpha_{j}^{\Ba})\mid 0<j<i}+d$
(here $\max\emptyset=0$).

\item\label{it:non-minimal}If $i>l$ then we set $\alpha_{i}^{\Ba}=c(\beta_{1}^{\Ba}\til\beta_{n}^{\Ba})$
for the unique equivalence classes $\beta_{1}\til \beta_{n}\suq\set{\alpha_{1}\til \alpha_{i-1}}$
and $c$ such that 
$y=c(x_{1}\til x_{n})$ occurs in $\Gamma'$ for some
$y\in\alpha_{i}$ and $x_{1}\in\beta_{1}\til x_{n}\in\beta_{n}$.

\end{enumerate}
Since $\Sigma$ is a datatypes signature in which all $\structsorts$-sorts are inductive, the second case of the definition is
well-defined.
Further,
the topological sort ensures $\beta_{1}\til \beta_{n}$ exist, and the partition
to equivalence classes ensures that they are unique.
Hence:

\begin{lemma}
\label{lem:IDT_well_defined}
$\alpha_{i}^{\Ba}$  is well-defined.
\end{lemma}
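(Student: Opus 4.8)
The plan is to verify well-definedness of the inductive definition of $\alpha_{i}^{\Ba}$ by going through the three clauses \Cref{it:nullary,it:minimal_rest,it:non-minimal} separately, checking in each case that the prescribed value both exists and is unambiguously determined. For \Cref{it:nullary} there is essentially nothing beyond an observation made already in Step~2: all variables in an equivalence class agree in $\Aa$, so $\alpha_{i}^{\Aa}$ is a single, well-defined element; moreover, since $\alpha_{i}$ is nullary, that element is a nullary constructor term, which belongs to $T_{\sigma}(\consig{\Sigma},\Ba_{\elemsorts})$ and is therefore a legitimate value for the domain of $\Ba$. Hence $\alpha_{i}^{\Ba}=\alpha_{i}^{\Aa}$ is well-defined.

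For \Cref{it:minimal_rest} the only content is existence: I must show that $T_{\sigma}(\consig{\Sigma},\Ba_{\elemsorts})$ really contains an element whose depth exceeds the finite bound $\max\{\depth(\alpha_{j}^{\Ba})\mid 0<j<i\}+d$. This is exactly where inductiveness of $\sigma$ enters. I would first note that $\Ba_{\elemsorts}$ is a genuine $\elemsorts$-sorted set, i.e.\ that $\sigma^{\Ba}=\fv{\sigma}{\Gamma}^{\Aa}$ is non-empty for every $\sigma\in\elemsorts$; this holds precisely because \Cref{it:identity} of \Cref{def:IDT_witness} forces a variable of each element sort into $\Gamma$. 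Then, by the observation following \Cref{def:ind_fin_sort} (for any $\elemsorts$-sorted set $D$ and any bound there is an element of $T_{\sigma}(\consig{\Sigma},D)$ of greater depth), applied with $D=\Ba_{\elemsorts}$, such a deep element exists, so an ``arbitrary'' choice is possible.

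For \Cref{it:non-minimal} I must establish both existence and uniqueness of the defining conjunct. Existence follows from non-minimality: since $i>l$ and the topological sort lists all $l$ minimal classes first, $\alpha_{i}$ is not minimal, so by the definition of $\prec$ there is a conjunct $y=c(x_{1}\til x_{n})$ of $\Gamma'$ with $y\in\alpha_{i}$; as each argument satisfies $[x_{j}]\prec\alpha_{i}$, the topological order places every $[x_{j}]$ among $\alpha_{1}\til\alpha_{i-1}$, so that each $\beta_{j}^{\Ba}$ is already defined by the induction hypothesis. Uniqueness is the crux and I expect it to be the main (if modest) obstacle, as it is the only point where the specific semantics of $\Ta_{\Sigma}$, rather than bookkeeping about $G$, is used. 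I need that $c(\beta_{1}^{\Ba}\til\beta_{n}^{\Ba})$ is independent of the chosen conjunct: if $y=c(x_{1}\til x_{n})$ and $y'=c'(x_{1}'\til x_{n'}')$ both occur in $\Gamma'$ with $y,y'\in\alpha_{i}$, then $y^{\Aa}=y'^{\Aa}$, and since constructors are interpreted in every datatypes structure as free term constructors, $c(x_{1}^{\Aa}\til x_{n}^{\Aa})=c'(x_{1}'^{\Aa}\til x_{n'}'^{\Aa})$ forces $c=c'$, $n=n'$, and $x_{j}^{\Aa}=x_{j}'^{\Aa}$ for every $j$. Thus $[x_{j}]=[x_{j}']$, i.e.\ $\beta_{j}=\beta_{j}'$, and the two conjuncts yield the same value, completing the verification.
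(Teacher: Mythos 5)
Your proof is correct and follows essentially the same route as the paper's: the nullary and $\elemsorts$-minimal cases are immediate, the non-nullary minimal case reduces to the existence of arbitrarily deep trees over the (non-empty, thanks to \Cref{it:identity}) domains $\Ba_{\elemsorts}$ via inductiveness of the sort, and for \Cref{it:non-minimal} existence comes from non-minimality plus the topological order while uniqueness comes from $y^{\Aa}=y'^{\Aa}$ and the free interpretation of constructors in $\Aa\models\Gamma'$ forcing $c=c'$ and $[x_{j}]=[x_{j}']$. The only difference is that you spell out the depth-existence and non-emptiness details that the paper dispatches with a one-line remark, which is harmless.
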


\noindent
{\bf Step 5 -- Interpretation of Selectors: }
Let $s_{c,i}\in\selectors$ for
$c:\sigma_{1}\times\ldots\times\sigma_{n}\ra\sigma$,
$1\leq i\leq n$ and $a\in\sigma^{\Ba}$.
If $a\in is_{c}^{\Ba}$, we must have
$a=c(a_{1}\til a_{n})$ for some $a_{1}\in\sigma_{1}^{\Ba}\til a_{n}\in\sigma_{n}^{\Ba}$.
We then set $s_{c,i}^{\Ba}(a)=a_{i}$.
Otherwise, we consider two cases.
If $x^{\Ba}=a$ for some $x\in\fv{}{\Gamma}$ such that
$y=s_{c,i}(x)$ occurs in $\Gamma_{2}$ for some $y$,
we set
$s_{c,i}^{\Ba}(a)=y^{\Ba}$.
Otherwise, 
$s_{c,i}^{\Ba}(a)$ is set arbitrarily.

\begin{example}

Let $\Gamma$ be 
the following $\siglists$-formula:
$x_1 = cons(e_1, x_2) \land x_3 = cons(e_2, x_4) \land x_2 \ne x_4$.
Then $\Gamma'=\Gamma$.
We have the following satisfying interpretation $\Aa$: 
$\elemsort^\Aa = \{1, 2, 3, 4\}$,
$e_1^\Aa = 1, e_2 ^\Aa = 2$, $x_1^\Aa = [1, 2, 3], x_2^\Aa = [2, 3], x_3^\Aa = [2, 2, 4], x_4^\Aa = [2, 4]$.
The construction above yields the following interpretation $\Ba$:
$\elemsort^\Ba = \{1, 2\}$,
$e_1^\Ba = 1, e_2 ^\Ba = 2$. 
For $\listsort$-variables, we proceed as follows.
The equivalence classes of $\listsort$-variables are $[x_1], [x_2], [x_3], [x_4]$, with
$[x_2]\prec [x_1]$ and
$[x_4]\prec [x_3]$.
The length of the longest path in $G$ is 1.
Assuming $[x_2]$ comes before $[x_4]$ in the topological sort,
$x_2^{\Ba}$ will get an arbitrary list over $\set{1,2}$ with length greater than 1 
(the depth of $e_2^{\Ba}$ plus the length of the longest path),
say, $[1,1,1]$.
$x_{4}^{\Ba}$ will then get an arbitrary list of length greater than
$4$ (the depth of $x_2^{\Ba}$ plus the length of the longest path).
Thus we could have $x_{4}^{\Ba}=[1,1,1,1,1]$.
Then, $x_{1}^{\Ba}=[1,1,1,1]$ and
$x_{3}^{\Ba}=[2,1,1,1,1,1]$.
\end{example}

\medskip
Now that $\Ba$ is defined, it is left to show that it is a finite
witness of $\Gamma$ for $\Ta_{\Sigma}$ w.r.t.\ $\elemsorts$.
By construction, $\sigma^{\Ba}=\fv{\sigma}{\Gamma}^{\Ba}$ 
for every $\sigma\in\elemsorts$. 
$\Ba$ also preserves the equalities and disequalities in $\Aa$,
and by considering 
every shape of
a literal in $\Gamma'$ we can prove 
that $\Ba\models\Gamma'$. 
Our interpretation of the selectors then ensures that:

\begin{lemma}
\label{lem:IDT_sat}
    $\Ba \models \Gamma$.
\end{lemma}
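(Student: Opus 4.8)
The plan is to reduce the claim $\Ba\models\Gamma$ to showing $\Ba\models\Gamma_{1}$, where $\Gamma_{1}$ is the conjunction of flat literals built in Step~1 by replacing each disjunction of $\Gamma$ with the disjunct that $\Aa$ satisfies. Since each chosen disjunct entails the disjunction it replaces and every other conjunct of $\Gamma_{1}$ already occurs in $\Gamma$, any model of $\Gamma_{1}$ is a model of $\Gamma$, so $\Ba\models\Gamma_{1}$ suffices. The literals of $\Gamma_{1}$ split into three groups: the literals retained in $\Gamma'=\Gamma_{3}$ (equalities, disequalities, and constructor equations $y=c(x_{1}\til x_{n})$), the tester literals $is_{c}(x)$ and $\neg is_{c}(x)$ dropped to form $\Gamma_{2}$, and the selector equations $y=s_{c,i}(x)$ dropped to form $\Gamma_{3}$. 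I would handle the three groups in turn.

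For the $\Gamma'$ literals I would argue by their shape. An equality $x=y$ holds in $\Ba$ because $\Aa\models x=y$ places $x$ and $y$ in a single class, which receives a single value. A constructor equation $y=c(x_{1}\til x_{n})$ holds because $[y]$ is non-minimal and Step~4 sets $[y]^{\Ba}=c([x_{1}]^{\Ba}\til[x_{n}]^{\Ba})$; by well-definedness (\Cref{lem:IDT_well_defined}) all constructor equations of a class agree, so $y^{\Ba}=c(x_{1}^{\Ba}\til x_{n}^{\Ba})$. The only substantial case is a disequality $x\neq y$: here $\Aa\models x\neq y$ gives $[x]\neq[y]$, so I must show that distinct classes receive distinct values in $\Ba$, i.e.\ that $\alpha\mapsto\alpha^{\Ba}$ is injective.

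Proving this injectivity is the heart of the argument and the step I expect to require the most care. Element classes keep their $\Aa$-values and hence stay distinct, and no element value equals a $\structsorts$ value. For two $\structsorts$ classes $\alpha_{i},\alpha_{j}$ with $i<j$ I would use strong induction on $j$ to show $\alpha_{i}^{\Ba}\neq\alpha_{j}^{\Ba}$. If $\alpha_{j}$ is nullary the values come from $\Aa$; if $\alpha_{j}$ is minimal and non-nullary, its depth was chosen strictly above $\max_{k<j}\depth(\alpha_{k}^{\Ba})+d$ and hence exceeds $\depth(\alpha_{i}^{\Ba})$. If $\alpha_{j}$ is non-minimal then $\alpha_{j}^{\Ba}=c(\ldots)$ and also $\alpha_{j}^{\Aa}=c(\ldots)$ since $\Aa\models\Gamma'$; against a non-minimal $\alpha_{i}$, equal values force equal top constructors and, applying the induction hypothesis to the argument classes (all of index below $j$), equal argument classes, whence $\alpha_{i}^{\Aa}=\alpha_{j}^{\Aa}$, contradicting $i\neq j$. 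The delicate subcase pits a non-minimal $\alpha_{j}$ against a minimal non-nullary $\alpha_{i}$: here I would observe that the depth of a non-minimal class equals the depth of some minimal class plus a path length bounded by $d$, while minimal non-nullary classes are spaced more than $d$ apart in depth; this is precisely why the $+d$ gap in Step~4 matches the longest-path bound $d$, and it guarantees $\depth(\alpha_{j}^{\Ba})\neq\depth(\alpha_{i}^{\Ba})$, so the values differ.

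Finally, the witness construction makes the remaining two groups routine. By \Cref{it:tester1,it:testers}, whenever $is_{c}(x)$ (resp.\ $\neg is_{c}(x)$) occurs in $\Gamma_{1}$, a constructor equation $x=c(\ora{u})$ (resp.\ $x=d(\ora{u_{d}})$ with $d\neq c$) also occurs in $\Gamma_{1}$, and since testers in a datatypes structure recognize the top constructor, $\Ba$ satisfies these literals. For a selector equation $y=s_{c,i}(x)$ I would split on whether $\Aa\models is_{c}(x)$: if so, the disjunct selected from \Cref{it:selectors} puts $y$ in the $i$-th argument of $x=c(\ldots)$, so $x^{\Ba}\in is_{c}^{\Ba}$ and the theory-fixed selector returns $y^{\Ba}$; otherwise $x^{\Ba}\notin is_{c}^{\Ba}$, and the second clause of Step~5, which sets $s_{c,i}^{\Ba}(x^{\Ba})=y^{\Ba}$ exactly because $y=s_{c,i}(x)$ occurs in $\Gamma_{2}$, gives the equality, with injectivity again ensuring this assignment is consistent. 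Together the three groups yield $\Ba\models\Gamma_{1}$, and hence $\Ba\models\Gamma$.
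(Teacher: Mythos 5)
Your proposal is correct and takes essentially the same route as the paper's proof: reduction of $\Gamma$ to the chosen-disjunct conjunction $\Gamma_{1}$, an injectivity property for the class assignment (the paper's auxiliary lemma that $x^{\Aa}=y^{\Aa}$ iff $x^{\Ba}=y^{\Ba}$) proved by induction along the topological order with the same non-minimal-versus-non-minimal case and the same source-vertex depth-gap argument, and the same discharge of testers and selectors via the witness-added constructor equations and Step~5. The only slip is your claim that $[y]$ is non-minimal for every constructor equation $y=c(x_{1}\til x_{n})$: when $c$ is nullary, $[y]$ is a nullary (hence minimal) class, and the equation holds instead because Step~4 sets $y^{\Ba}=y^{\Aa}$ --- precisely the case the paper treats separately, and a one-line patch to your argument.
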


\Cref{lem:IDT_sat}, together with the definition of the domains
of $\Ba$, gives us that 
$\Ba$ is a finite witness of 
$\Gamma$ for 
$\Ta_{\Sigma}$ 
w.r.t.\ $\elemsorts$, and so \Cref{lem:IDT_witness} is proven.
As a corollary of \Cref{lem:DT_smooth,lem:IDT_witness,lem:IDT_additive}, strong politeness is obtained.
\begin{theorem}
\label{thm:polite_ind}
If $\Sigma$ is a datatypes signature and all sorts in $\structsorts_{\Sigma}$
are inductive, then
    $\Ta_{\Sigma}$ is strongly polite w.r.t.\ $\elemsorts_{\Sigma}$.
\end{theorem}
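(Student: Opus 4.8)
The plan is to obtain \Cref{thm:polite_ind} as an immediate corollary of the three structural lemmas already established for the inductive case, assembling them through the machinery of \Cref{secweakstrong}. Recall that a theory is strongly polite w.r.t.\ a set of sorts if it is both smooth and strongly finitely witnessable w.r.t.\ that set (\Cref{def:polite}-style definition). Smoothness w.r.t.\ $\elemsorts$ is exactly \Cref{lem:DT_smooth}, which holds for every datatypes signature and thus in particular when all $\structsorts$-sorts are inductive. So the only remaining task is to establish strong finite witnessability, and the whole point of the additivity apparatus is to let us do this without reasoning directly about arrangements.

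The key move is to package $\iwitness$ as a \emph{witness} in the technical sense and then invoke \Cref{additive}. First I would check the two defining conditions of a witness for $\Ta_{\Sigma}$ w.r.t.\ $\elemsorts$: the $T$-equivalence of $\phi$ with $\exists\,\ora{w}.\:\iwitness(\phi)$ is exactly \Cref{lem:equivreq}, and the fact that $\iwitness(\phi)$ is finitely witnessed is exactly \Cref{lem:IDT_witness}. (Both lemmas are stated for conjunctions of flat literals, but the paragraph preceding \Cref{def:IDT_witness} explains how to lift a witness defined on flat conjunctions to all of $\qf(\Sigma)$ via flat DNF, and it is routine to see that this lifting preserves both equivalence and finite witnessedness.) Since $\iwitness$ is plainly computable, this shows $\iwitness$ is a computable witness, i.e.\ $\Ta_{\Sigma}$ is finitely witnessable w.r.t.\ $\elemsorts$.

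Now the additivity lemma does the real work of promoting this to strong witnessability. \Cref{lem:IDT_additive} states that $\iwitness$ is $\elemsorts$-additive, and \Cref{additive} says precisely that an $S$-additive witness is a strong witness. Combining these, $\iwitness$ is a \emph{strong} witness for $\Ta_{\Sigma}$ w.r.t.\ $\elemsorts$, so $\Ta_{\Sigma}$ is strongly finitely witnessable w.r.t.\ $\elemsorts$. Together with smoothness from \Cref{lem:DT_smooth}, this yields strong politeness w.r.t.\ $\elemsorts$, which is the claim. Equivalently, one can phrase this as: $\Ta_{\Sigma}$ is additively polite (smooth plus additively finitely witnessable) by \Cref{lem:DT_smooth,lem:IDT_witness,lem:IDT_additive}, and then \Cref{addfinwit} delivers strong politeness in one step.

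Since every component lemma is already available, this proof is essentially bookkeeping, so there is no genuine obstacle \emph{at this level}. The real mathematical content lives in the three cited lemmas, and the one I expect to be most delicate is \Cref{lem:IDT_witness} (finite witnessability): the subtle point is the depth-gap construction in Step~4, where minimal non-nullary classes must be assigned trees whose depths are separated by more than the longest path length $d$ in $G$, so that the non-minimal classes built on top of them via \Cref{it:non-minimal} cannot accidentally collapse a disequality from $\Aa$. Inductiveness of all $\structsorts$-sorts is exactly what guarantees trees of arbitrarily large depth exist to make this assignment possible, which is why the inductive hypothesis appears in the statement; verifying that this separation is preserved under the constructor applications is the genuinely nontrivial step underlying the corollary.
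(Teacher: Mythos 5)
Your proposal is correct and matches the paper's own proof, which obtains \Cref{thm:polite_ind} exactly as a corollary of \Cref{lem:DT_smooth,lem:IDT_witness,lem:IDT_additive} via the additivity mechanism of \Cref{additive} (equivalently \Cref{addfinwit}). Your additional explicit appeals to \Cref{lem:equivreq} and to the flat-DNF lifting are implicit in the paper's argument, so there is no substantive difference.
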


\subsection{Finite datatypes}
\label{sec:finite_IDT}
In this section, we assume 
that all sorts in $\structsorts$ are finite. 
%

For finite witnessability, we define the following witness,
that guesses the construction of each $\structsorts$-variables 
until a fixpoint is reached.
For every quantifier-free conjunction of flat $\Sigma$-literals $\phi$,
define the sequence $\phi_{0},\phi_{1},\ldots$, such that
$\phi_{0}=\phi$, and for every $i\geq 0$,
$\phi_{i+1}$ is obtained from $\phi_{i}$ by conjuncting it with
a disjunction
$\bigvee_{c\in\constructors} x=c(w_1^{c}\til w_{n_{c}}^{c})$
for fresh $w_{1}^{c}\til w_{n_{c}}^{c}$,
where $x$ is some arbitrary $\structsorts$-variable in $\phi_{i}$
such that there is no literal of the form
$x=c(y_{1}\til y_{n})$ in $\phi_{i}$ for any constructor $c$
and variables $y_{1}\til y_{n}$,
if such $x$ exists.
Since $\structsorts$ only has finite sorts, this sequence becomes constant
at some $\phi_{k}$.

\begin{definition}[A Witness for $\Ta_{\Sigma}$]
\label{def:IDT_finite_witness}
    $\fwitness(\phi)$ is $\phi_{k}$ for the minimal $k$
such that $\phi_{k}=\phi_{k+1}$.
\end{definition}

\begin{example}
Let $\phi$ be the $\sigpairs$-formula
$x=first(y)\w x'=first(y')\w x\neq x'$.
$\fwitness(\phi)$ 
is
$\phi\w y=pair(e_1,e_2)\w y'=pair(e_3,e_4)$.
\end{example}

Similarly to \Cref{sec:inf_IDT}, we have:

\begin{lemma}
\label{lem:IDT_finite_additive}
        $\fwitness$ is $\elemsorts$-additive.
\end{lemma}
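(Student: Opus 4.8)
The plan is to unfold the definition of $\elemsorts$-additivity and reduce the claim to one structural observation about the fixpoint that defines $\fwitness$. Fix an arbitrary $\phi\in\qf(\Sigma)$ and an arbitrary conjunction of flat literals $\fe$ in which every term is a variable of sort in $\elemsorts$. I must show that $\fwitness(\fwitness(\phi)\w\fe)$ and $\fwitness(\phi)\w\fe$ are $\Ta_{\Sigma}$-equivalent and have the same set of $\elemsorts$-sorted variables. The guiding observation is that $\fe$ can contribute only equalities and disequalities between $\elemsorts$-variables: since all of its terms are variables of sort in $\elemsorts$, it introduces no $\structsorts$-variable, no selector, no tester, and no constructor literal, so it can never trigger a step of the fixpoint procedure (which only ever picks $\structsorts$-variables).

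First I would record the defining property of the fixpoint. By construction $\fwitness(\phi)=\phi_{k}$ with $\phi_{k}=\phi_{k+1}$, and $\phi_{k}=\phi_{k+1}$ holds exactly when no $\structsorts$-variable $x$ occurring in $\phi_{k}$ lacks an occurrence of a literal of the form $x=c(\ldots)$ (otherwise the procedure would add a disjunction for such an $x$ and change the formula). Hence every $\structsorts$-variable occurring in $\fwitness(\phi)$ — including all the fresh variables created by the added disjunctions, which are themselves processed later in the construction — occurs inside some literal $x=c(\ldots)$.

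Next I would compute $\fwitness(\fwitness(\phi)\w\fe)$ through the extension of the witness to non-conjunctive inputs: convert $\fwitness(\phi)\w\fe$ to flat DNF and apply the restricted witness disjunct by disjunct. Distributing the disjunctions $\bigvee_{c}x=c(w^{c})$ of $\fwitness(\phi)$ turns each into a chosen top-level literal $x=c_{x}(w^{c_{x}})$; recursing through the activated fresh variables and adding $\fe$ (which supplies only $\elemsorts$ (dis)equalities) yields disjuncts $D$ in each of which every $\structsorts$-variable appears in a top-level constructor literal. Consequently the picking condition fails for every $\structsorts$-variable of $D$, the procedure halts at step $0$, and $\fwitness(D)=D$. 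Therefore $\fwitness(\fwitness(\phi)\w\fe)$ is precisely the DNF of $\fwitness(\phi)\w\fe$, and since DNF conversion preserves logical equivalence and the variable set, the two formulas are $\Ta_{\Sigma}$-equivalent and share exactly the same $\elemsorts$-sorted variables, which is $\elemsorts$-additivity.

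I expect the main obstacle to be the bookkeeping around the fresh variables of the fixpoint construction: one must verify, by an induction that follows the order in which variables are processed, that the fresh $\structsorts$-variables created deep inside the construction are themselves covered by a constructor literal in every disjunct of the DNF, so that no disjunct provokes a further step of the procedure. This mirrors the role played by the explicit ``does not occur'' guards in the inductive witness $\iwitness$, and the finiteness of the $\structsorts$-sorts — which is what guarantees the fixpoint exists at all — is exactly what makes this induction terminate.
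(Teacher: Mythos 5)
Your proof is correct and takes essentially the same route as the paper's: the paper omits this proof as ``similar'' to the inductive case, and its proof of the analogous additivity lemma for $\iwitness$ proceeds exactly as you do --- extend the witness to arbitrary quantifier-free formulas via flat DNF, observe that $\fe$ contributes only $\elemsorts$-sorted (dis)equalities, and show that in each DNF disjunct conjoined with $\fe$ no precondition for adding a formula holds, so a second application of the witness is the identity on each disjunct. Your key structural observation --- that the fixpoint condition guarantees every $\structsorts$-variable (including the fresh ones) has a constructor literal, and hence every DNF disjunct inherits one since each added disjunction concerns a single variable --- is precisely the substitute for the explicit ``does not occur'' guards used in the $\iwitness$ argument.
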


\begin{lemma}
\label{lem:fin-eq}
    $\phi$ and $\exists \ora{w}\:.\:\fwitness(\phi)$
    are $\Ta_{\Sigma}$-equivalent,
    where 
$\ora{w}=\fv{}{\fwitness(\phi)}\setminus\fv{}{\phi}$. 
\end{lemma}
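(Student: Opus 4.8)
The plan is to prove the two formulas $\Ta_{\Sigma}$-equivalent directly from the definition, i.e.\ by showing that they are satisfied by exactly the same class of $\Ta_{\Sigma}$-interpretations. Since $\ora{w}$ consists precisely of the variables of $\fwitness(\phi)$ that do not occur in $\phi$, the two formulas have the same free variables, so it suffices to show that a $\Ta_{\Sigma}$-interpretation $\Aa$ defined on $\fv{}{\phi}$ satisfies $\phi$ iff it satisfies $\exists\,\ora{w}.\:\fwitness(\phi)$. One direction is immediate: by \Cref{def:IDT_finite_witness}, $\fwitness(\phi)$ is syntactically of the form $\phi\w\rho$, so any extension of $\Aa$ to $\ora{w}$ that witnesses $\exists\,\ora{w}.\:\fwitness(\phi)$ already satisfies $\phi$, and since $\phi$ does not mention $\ora{w}$ we get $\Aa\models\phi$.

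For the converse I would argue by induction along the defining sequence $\phi_{0}=\phi,\phi_{1}\til\phi_{k}=\fwitness(\phi)$, establishing that for each $i$ the formula $\exists\,\ora{w_{i}}.\:\phi_{i}$ is $\Ta_{\Sigma}$-equivalent to $\phi$, where $\ora{w_{i}}$ collects the fresh variables introduced in the first $i$ steps. The base case $i=0$ is trivial. For the inductive step, recall that $\phi_{i+1}=\phi_{i}\w\psi$ with $\psi=\bigvee_{c}x=c(\ora{w^{c}})$ for some $\structsorts$-variable $x$ of $\phi_{i}$ (where $c$ ranges over the constructors of the sort of $x$) and fresh tuples $\ora{w^{c}}$. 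Because the variables in the $\ora{w^{c}}$ do not occur in $\phi_{i}$, the existentials over them can be pushed past $\phi_{i}$ and distributed over the disjunction, so that $\exists\,\ora{w_{i+1}}.\:\phi_{i+1}$ is $\Ta_{\Sigma}$-equivalent to $\exists\,\ora{w_{i}}.\:(\phi_{i}\w\bigvee_{c}\exists\,\ora{w^{c}}.\:x=c(\ora{w^{c}}))$.

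The crux is that the conjunct $\bigvee_{c}\exists\,\ora{w^{c}}.\:x=c(\ora{w^{c}})$ is $\Ta_{\Sigma}$-valid. Indeed, each disjunct $\exists\,\ora{w^{c}}.\:x=c(\ora{w^{c}})$ expresses exactly $is_{c}(x)$, and in any datatypes structure the domain of the sort $\sigma$ of $x$ equals $T_{\sigma}(\consig{\Sigma},D)$; since $\sigma\in\structsorts$, every element of this set is a constructor application, so $x^{\Aa}$ is necessarily of the form $c(t_{1}\til t_{n})$ for some constructor $c$ of sort $\sigma$. Assigning $\ora{w^{c}}$ to $(t_{1}\til t_{n})$ then satisfies the corresponding disjunct (the nullary case $\ora{w^{c}}$ empty is covered as well). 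Hence this conjunct may be dropped, giving $\Ta_{\Sigma}$-equivalence of $\exists\,\ora{w_{i+1}}.\:\phi_{i+1}$ with $\exists\,\ora{w_{i}}.\:\phi_{i}$, and the inductive claim follows from the induction hypothesis. Instantiating at $i=k$ yields the lemma.

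The step I expect to require the most care is the bookkeeping of quantifiers across the iteration: the variable $x$ targeted at step $i+1$ may itself be one of the fresh variables introduced at an earlier step, so a naive attempt to assign all of $\ora{w}$ simultaneously would be circular. Proceeding by induction on the sequence index $0,1\til k$ sidesteps this cleanly, since at step $i+1$ the variable $x$ already occurs in $\phi_{i}$ and has therefore already been interpreted by the time its disjunction is processed. Overall the argument mirrors the proof of \Cref{lem:equivreq} for the inductive case; the only difference is that here the guessed disjunction ranges over all constructors of the sort of $x$ rather than being triggered by an explicit selector or tester literal, while the validity argument for the existentially closed disjunction is identical in spirit.
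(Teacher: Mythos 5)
Your proof is correct. The paper itself omits the proof of this lemma, remarking only that it goes ``similarly'' to the inductive case, whose proof (\Cref{lem:equivreq}) works in one pass: there, each fresh variable occurs exactly once in the witness output, so each existential quantifier can be pushed onto the unique literal containing it, followed by a case analysis on the added conjuncts. Your route differs in a way that is genuinely needed for $\fwitness$: because the guessing iterates to a fixpoint, a fresh $\structsorts$-variable $w$ introduced at step $i$ (inside a disjunct $x=c(\dots,w,\dots)$) is itself expanded at a later step, so it occurs in two conjuncts of $\phi_{k}$ and the unique-occurrence quantifier-pushing of the \Cref{lem:equivreq} proof does not transplant literally. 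Your induction on the sequence index $\phi_{0}\til\phi_{k}$ repairs exactly this: at step $i+1$ the targeted variable $x$ already occurs in $\phi_{i}$, the tuples $\ora{w^{c}}$ are fresh for $\phi_{i}$, so the existentials distribute over the disjunction, and the conjunct $\bigvee_{c}\exists\,\ora{w^{c}}.\:x=c(\ora{w^{c}})$ is $\Ta_{\Sigma}$-valid since $\sigma^{\Aa}=T_{\sigma}(\consig{\Sigma},D)$ and $T_{\sigma,0}(\consig{\Sigma},D)=\emptyset$ for $\sigma\in\structsorts$, i.e.\ every element of a structure sort is a constructor application (the nullary case included). The shared semantic core --- validity of the existentially closed guessed disjunction --- is the same as in the paper's template, but your iteration-indexed induction is the more careful bookkeeping, and it is what makes the argument sound in the presence of nested guesses; the easy direction via the syntactic inclusion $\phi$ in $\fwitness(\phi)=\phi\w\rho$ is unproblematic. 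One cosmetic caveat: as stated in \Cref{def:IDT_finite_witness} the disjunction ranges over all of $\constructors$, which should be read (as you do) as ranging over the constructors whose codomain is the sort of $x$, since the other disjuncts would be ill-sorted.
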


We now prove the following lemma:

\begin{lemma}[Finite Witnessability]
\label{lem:IDT_finite_witness}
        Let $\phi$ be a conjunction of flat literals.
        Then, $\fwitness(\phi)$ is finitely witnessed for $\Ta_{\Sigma}$ with respect to $\elemsorts$.
\end{lemma}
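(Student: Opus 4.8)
The plan is to follow the proof of \Cref{lem:IDT_witness}, exploiting the fact that $\fwitness$, unlike $\iwitness$, fully expands every $\structsorts$-variable; this removes the need for the equivalence-class, topological-sort and depth machinery of the inductive case. Assume $\fwitness(\phi)$ is $\Ta_{\Sigma}$-satisfiable and fix a satisfying $\Ta_{\Sigma}$-interpretation $\Aa$; I will construct a finite witness $\Ba$ of $\fwitness(\phi)$ for $\Ta_{\Sigma}$ w.r.t.\ $\elemsorts$. As in the first step of the proof of \Cref{lem:IDT_witness}, since $\Aa$ satisfies exactly one disjunct of each disjunction added by the fixpoint of \Cref{def:IDT_finite_witness}, replacing every such disjunction by the disjunct that $\Aa$ satisfies yields a conjunction of flat literals $\Gamma'$ (still containing the testers and selectors of $\phi$) with $\Aa\models\Gamma'$ and $\Gamma'\models\fwitness(\phi)$; hence it suffices to build $\Ba$ with $\Ba\models\Gamma'$. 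The crucial structural fact, which I would isolate first, is that \emph{every} $\structsorts$-variable $x$ of $\fwitness(\phi)$ carries a defining literal $x=c(\ora{w})$ in $\Gamma'$: by the termination condition of the fixpoint, each such $x$ either already occurs on the left of a constructor literal in $\phi$ or is supplied with a disjunction whose chosen disjunct is such a literal.

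Given this, I would define $\Ba$ directly. For every $\sigma\in\elemsorts$ set $\sigma^{\Ba}=\fv{\sigma}{\fwitness(\phi)}^{\Aa}$ and $e^{\Ba}=e^{\Aa}$ for each $\elemsorts$-variable $e$; the $\structsorts$-domains, constructors and testers are then the ones of the datatypes structure generated by $\Ba_{\elemsorts}$. For each $\structsorts$-variable $x$ I would simply put $x^{\Ba}=x^{\Aa}$, and interpret each selector $s_{c,i}$ by forcing $s_{c,i}^{\Ba}$ on the elements of $is_{c}^{\Ba}$, setting $s_{c,i}^{\Ba}(x^{\Ba})=y^{\Ba}$ for every literal $y=s_{c,i}(x)$ of $\Gamma'$, and choosing it arbitrarily elsewhere (exactly as in Step~5 of the proof of \Cref{lem:IDT_witness}, whose well-definedness is inherited from $\Aa$).

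The only step needing genuine justification is that $x^{\Ba}=x^{\Aa}$ is legitimate, i.e.\ that $x^{\Aa}\in T_{\sigma}(\consig{\Sigma},\Ba_{\elemsorts})$ for every $\structsorts$-variable $x$ of sort $\sigma$, despite the shrinking of the $\elemsorts$-domains. Here I would use the structural fact above: from $x=c(\ora{w})$ in $\Gamma'$ we get $x^{\Aa}=c(\ora{w}^{\Aa})$, and by flatness every $\elemsorts$-position of $c$ is filled by an $\elemsorts$-variable while every $\structsorts$-position is filled by a $\structsorts$-variable carrying its own defining literal. Since $\Aa$ is a datatypes structure, $x^{\Aa}$ is a finite tree (and, as all $\structsorts$-sorts are finite, of bounded depth), so iterating this unfolding terminates and shows that every $\elemsorts$-leaf of $x^{\Aa}$ is the value $e^{\Aa}$ of some $\elemsorts$-variable $e$ occurring in $\fwitness(\phi)$, hence a member of $\Ba_{\elemsorts}$. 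Thus $x^{\Aa}$ is a tree over $\Ba_{\elemsorts}$, as required. I expect this leaf-tracking argument to be the main obstacle; the remainder is bookkeeping.

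It then remains to check $\Ba\models\Gamma'$ by a case analysis on the flat literals of $\Gamma'$. Equalities and disequalities between variables are preserved because $e^{\Ba}=e^{\Aa}$ and $x^{\Ba}=x^{\Aa}$; a constructor literal $x=c(\ora{w})$ holds since $c^{\Ba}(\ora{w}^{\Ba})=c(\ora{w}^{\Aa})=x^{\Aa}=x^{\Ba}$; testers $is_{c}(x)$ and $\neg is_{c}(x)$ hold because $x^{\Ba}$ and $x^{\Aa}$ share their top constructor; and selector literals hold by the interpretation chosen above. From $\Ba\models\Gamma'$ and $\Gamma'\models\fwitness(\phi)$ we obtain $\Ba\models\fwitness(\phi)$, while by construction $\sigma^{\Ba}=\fv{\sigma}{\fwitness(\phi)}^{\Aa}=\fv{\sigma}{\fwitness(\phi)}^{\Ba}$ for every $\sigma\in\elemsorts$. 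Hence $\Ba$ is a finite witness of $\fwitness(\phi)$ for $\Ta_{\Sigma}$ w.r.t.\ $\elemsorts$, which proves the lemma.
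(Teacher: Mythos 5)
Your proof is correct and takes essentially the same approach as the paper's: you define $\Ba$ by shrinking the $\elemsorts$-domains to the values of the variables of $\fwitness(\phi)$, keeping $x^{\Ba}=x^{\Aa}$ for $\structsorts$-variables and inheriting the selector interpretation from $\Aa$, and your key well-definedness point is exactly where the paper appeals to ``the finiteness of the sorts in $\structsorts$ and the definition of $\fwitness$''. Your disjunct-selection step yielding $\Gamma'$ and the explicit leaf-tracking unfolding merely spell out details the paper compresses into the remark that every $\elemsorts$-element occurring in $x^{\Aa}$ has a corresponding variable and that $\Ba\models\Gamma$ is then a trivial consequence of $\Aa\models\Gamma$.
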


Suppose $\Gamma=\fwitness(\phi)$ is $\Ta_{\Sigma}$-satisfiable, and
let $\Aa$ be a satisfying $\Ta_{\Sigma}$-interpretation.
We define a $\Ta_{\Sigma}$-interpretation $\Ba$ which is
a finite witness of $\Gamma$ for $\Ta_{\Sigma}$ w.r.t.\ $\elemsorts$.
We set
$\sigma^{\Ba}=\fv{\sigma}{\Gamma}^\Aa$ 
for every $\sigma\in\elemsorts$,
 $e^{\Ba}=e^{\Aa}$,
for every variable $e\in\fv{\elemsorts}{\Gamma}$
and
$x^{\Ba}=x^{\Aa}$
for every variable $x\in\fv{\structsorts}{\Gamma}$.
Selectors are also interpreted as they are interpreted in $\Aa$.
This is well-defined: for any $\structsorts$-variable $x$, every element in
$\sigma^{\Aa}$ for $\sigma\in\elemsorts$ that occurs in
$x^{\Aa}$ has a corresponding variable $e$ in $\Gamma$
such that $e^{\Aa}$ is that element.
This holds by the finiteness of the sorts in $\structsorts$ and the
definition of $\fwitness$.
Further, for any $\structsorts$-variable $x$ such that
$s_{c,i}(x)$ occurs in $\Gamma$, 
we must have that it occurs in some literal of the form
$y=s_{c,i}(x)$ of $\Gamma$.
Similarly to the above, all elements that occur in $y^{\Aa}$ and $x^{\Aa}$
have corresponding variables in $\Gamma$.
Therefore, $\Ba\models \Gamma$ is a trivial consequence of $\Aa\models \Gamma$. 
By the definition of its domains, $\Ba$ is a finite witness of 
$\Gamma$ for $\Ta_{\Sigma}$ w.r.t.\ $\elemsorts$, and so \Cref{lem:IDT_finite_witness} is proven.
Then, by \Cref{lem:DT_smooth,lem:IDT_finite_witness,lem:IDT_finite_additive}, strong politeness is obtained.

\begin{theorem}
\label{thm:polite_fin}
If $\Sigma$ is a datatypes signature and all sorts in $\structsorts_{\Sigma}$
are finite, then
    $\Ta_{\Sigma}$ is strongly polite w.r.t.\ $\elemsorts_{\Sigma}$.
\end{theorem}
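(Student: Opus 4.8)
The plan is to obtain strong politeness through the additive-witness route of \Cref{secweakstrong} rather than constructing a strong witness by hand. By \Cref{addfinwit} it is enough to show that $\Ta_\Sigma$ is \emph{additively polite} w.r.t.\ $\elemsorts$, that is, smooth and additively finitely witnessable. Smoothness is already in hand from \Cref{lem:DT_smooth}, which is stated for an arbitrary datatypes signature and therefore applies verbatim when every sort in $\structsorts$ is finite. The only remaining task is to produce one function on $\qf(\Sigma)$ that is at once a witness, computable, and $\elemsorts$-additive.

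The function I would use is $\fwitness$ of \Cref{def:IDT_finite_witness}, extended from conjunctions of flat literals to all of $\qf(\Sigma)$ by the usual flat-DNF reduction (put the input in flat disjunctive normal form and apply the witness disjunct-by-disjunct). To see that $\fwitness$ is a witness I would verify its two defining clauses on conjunctions of flat literals and then lift them across the disjunction: the equivalence clause is exactly \Cref{lem:fin-eq}, and the finite-witnessedness clause is exactly \Cref{lem:IDT_finite_witness}. Computability reduces to termination of the defining fixpoint iteration $\phi_0,\phi_1,\dots$; here is where the finiteness hypothesis is essential, since each step assigns a top constructor to one more $\structsorts$-variable and the finiteness of every structure sort forbids unbounded nesting, so the sequence stabilizes after finitely many steps. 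Finally, $\elemsorts$-additivity is supplied by \Cref{lem:IDT_finite_additive}.

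Putting these together, $\fwitness$ is a computable, $\elemsorts$-additive witness for $\Ta_\Sigma$ w.r.t.\ $\elemsorts$, so $\Ta_\Sigma$ is additively finitely witnessable and hence additively polite; \Cref{addfinwit} then delivers strong politeness w.r.t.\ $\elemsorts$. I do not anticipate a real obstacle at the level of the theorem itself, which is a direct instantiation of \Cref{addfinwit} once the supporting lemmas are available --- the substantive work has already been absorbed into those lemmas. The one point that warrants care in the assembly is the DNF lifting: one must confirm that a disjunction of additive witnesses still meets the additivity condition (whose second argument is restricted to conjunctions of flat literals over $S$-sorted variables) and still satisfies the equivalence and witnessedness clauses. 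It is also worth flagging, by contrast with the inductive case, that finiteness of $\structsorts$ is precisely what lets the witness terminate as a fixpoint and what makes the candidate model $\Ba$ obtainable by simply restricting $\Aa$ to the $\elemsorts$-elements named in $\Gamma$, with no depth-based shrinking required.
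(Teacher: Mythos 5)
Your proposal is correct and matches the paper's own argument essentially step for step: the paper likewise assembles \Cref{lem:DT_smooth} (smoothness), \Cref{lem:IDT_finite_witness} (finite witnessability, with the same model $\Ba$ obtained by restricting $\Aa$'s element domains), \Cref{lem:fin-eq}, and \Cref{lem:IDT_finite_additive} (additivity), and then invokes the additive-witness machinery of \Cref{additive} and \Cref{addfinwit} to conclude strong politeness. The two points you flag --- termination of the fixpoint from finiteness of the structure sorts, and the flat-DNF lifting of the witness and of additivity --- are exactly the points the paper handles (the former in the text before \Cref{def:IDT_finite_witness}, the latter by the same argument it spells out for $\iwitness$ in the appendix), so nothing is missing.
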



\subsection{Combining finite and inductive datatypes}
\label{sec:combined_IDT}

Now we consider the general case.
Let $\Sigma$ be a datatypes signature.
We prove that $\Ta_{\Sigma}$ is strongly polite w.r.t.\ $\elemsorts$.
We show that 
there are datatypes signatures $\Sigma_{1},\Sigma_{2}\suq\Sigma$
such that 
$\Ta_{\Sigma}=\Ta_{\Sigma_{1}}\oplus\Ta_{\Sigma_{2}}$,
and then use \Cref{thm:lpar}.
In $\Sigma_{1}$, inductive sorts are excluded, while in
$\Sigma_{2}$, finite sorts are considered to be element sorts.

Formally, 
we set $\Sigma_{1}$ as follows:
where $\elemsorts_{\Sigma_{1}}=\elemsorts_{\Sigma}$
and $\structsorts_{\Sigma_{1}}=\Fin(\Sigma)$.
$\func{\Sigma_{1}}=\constructors_{\Sigma_{1}}\uplus\selectors_{\Sigma_{1}}$,
where
$\constructors_{\Sigma_{1}}=\{c:\sigma_{1}\times\ldots\times\sigma_{n}\ra\sigma\mid c\in\constructors_{\Sigma},\sigma\in
\structsorts_{\Sigma_{1}}\}$
and
$\selectors_{\Sigma_{1}}$
and $\predecessors_{\Sigma_{1}}$ are
the corresponding selectors and testers.
Notice that if $\sigma$ is finite and 
$c:\sigma_{1}\times\ldots\times \sigma_{n}\ra\sigma$ is in
$\constructors_{\Sigma}$,
then $\sigma_{i}$ must be finite or in $\elemsorts_{\Sigma}$ for every $1\leq i\leq n$.
Next, we set $\Sigma_{2}$ as follows: $\sorts{\Sigma_{2}}=\elemsorts_{\Sigma_{2}}\uplus\structsorts_{\Sigma_{2}}$,
where $\elemsorts_{\Sigma_{2}}=\elemsorts_{\Sigma}\cup \Fin(\Sigma)$
and $\structsorts_{\Sigma_{2}}=\Ind(\Sigma)$.
$\func{\Sigma_{2}}=\constructors_{\Sigma_{2}}\uplus\selectors_{\Sigma_{2}}$,
where
$\constructors_{\Sigma_{2}}=\{c:\sigma_{2}\times\ldots\times\sigma_{n}\ra\sigma\mid
c\in\constructors_{\Sigma},\sigma\in \structsorts_{\Sigma_{2}}\}$
and
$\selectors_{\Sigma_{2}}$ and $\predecessors{\Sigma_{2}}$ are
the corresponding selectors and testers.
Thus,
$\Ta_{\Sigma}=\Ta_{\Sigma_{1}}\oplus\Ta_{\Sigma_{2}}$.
Now set $S=\elemsorts_{\Sigma}\cup \Fin(\Sigma)$,
$S_{1}=\elemsorts_{\Sigma}$, 
$S_{2}=\elemsorts_{\Sigma}\cup \Fin(\Sigma)$,
$T_{1}=\Ta_{\Sigma_{1}}$, and
$T_{2}=\Ta_{\Sigma_{2}}$.

By \Cref{thm:polite_fin}, $T_{1}$ is strongly polite w.r.t.\ $S_{1}$
and by \Cref{thm:polite_ind}, $T_{2}$ is strongly polite w.r.t.\ $S_{2}$.
By \Cref{thm:lpar}
we have:

\begin{theorem}
\label{thm:polite}
If $\Sigma$ is a datatypes signature 
then
    $\Ta_{\Sigma}$ is strongly polite w.r.t.\ $\elemsorts_{\Sigma}$.
\end{theorem}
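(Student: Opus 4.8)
The plan is to reduce the general case to the two special cases already established---\Cref{thm:polite_fin} for purely finite datatypes and \Cref{thm:polite_ind} for purely inductive ones---by splitting $\Sigma$ into a finite part $\Sigma_1$ and an inductive part $\Sigma_2$, and then invoking the strong-politeness combination result \Cref{thm:lpar}. The structural fact that makes this decomposition possible is that constructors of finite sorts can only take arguments of finite or element sorts: if $\sigma\in\Fin(\Sigma)$ and $c:\sigma_{1}\times\ldots\times\sigma_{n}\ra\sigma$ is a constructor, then finiteness of $T_{\sigma}(\consig{\Sigma},\canon{\elemsorts})$ forces each $\sigma_{i}$ to be finite or in $\elemsorts$, since an inductive argument sort would generate infinitely many $\sigma$-trees. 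Thus the finite sorts are ``closed'' and can be carved off without reference to inductive sorts.

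First I would define the two signatures precisely. In $\Sigma_{1}$ I discard all inductive sorts, keeping $\elemsorts_{\Sigma_{1}}=\elemsorts_{\Sigma}$ and $\structsorts_{\Sigma_{1}}=\Fin(\Sigma)$ together with exactly the constructors whose target is finite (and their induced selectors and testers); the closure fact above guarantees these constructors mention no inductive sorts, so $\Sigma_{1}$ is a well-defined datatypes signature that inherits well-foundedness from $\Sigma$. In $\Sigma_{2}$ I instead demote the finite sorts to element sorts, setting $\elemsorts_{\Sigma_{2}}=\elemsorts_{\Sigma}\cup\Fin(\Sigma)$ and $\structsorts_{\Sigma_{2}}=\Ind(\Sigma)$, keeping the constructors whose target is inductive. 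Here the finite sorts become opaque domain elements, every remaining structure sort is inductive, and well-foundedness transfers because the canonical element data now supply the (former) finite-sort subtrees.

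The heart of the argument, and the step I expect to be the main obstacle, is establishing the identity $\Ta_{\Sigma}=\Ta_{\Sigma_{1}}\oplus\Ta_{\Sigma_{2}}$. This requires a careful correspondence between datatypes $\Sigma$-structures and structures of the combined theory. For the forward direction, given a datatypes $\Sigma$-structure generated by some $D$, its $\Sigma_{1}$-reduct is a datatypes $\Sigma_{1}$-structure generated by $D$, and its $\Sigma_{2}$-reduct is a datatypes $\Sigma_{2}$-structure generated by $D$ enriched with the interpretations of the finite sorts (now viewed as element data). The subtlety is that the two tree-generation processes must yield the same carriers: the $\Sigma$-trees of an inductive sort are exactly the $\Sigma_{2}$-trees built over the enriched element sorts, so the carrier of each inductive sort agrees. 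For the converse, any structure in the combination agrees with $T_{1}$ on the finite sorts and with $T_{2}$ on the inductive sorts over those finite carriers, and these reassemble uniquely into a single datatypes $\Sigma$-structure; verifying that constructors, selectors, and testers are interpreted consistently across the shared finite sorts is where the bookkeeping is most delicate.

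Once the decomposition is in hand, the conclusion follows mechanically. By \Cref{thm:polite_fin}, $\Ta_{\Sigma_{1}}$ is strongly polite w.r.t.\ $\elemsorts_{\Sigma_{1}}=\elemsorts_{\Sigma}$, and by \Cref{thm:polite_ind}, $\Ta_{\Sigma_{2}}$ is strongly polite w.r.t.\ $\elemsorts_{\Sigma_{2}}=\elemsorts_{\Sigma}\cup\Fin(\Sigma)$. Taking $S=\sorts{\Sigma_{1}}\cap\sorts{\Sigma_{2}}=\elemsorts_{\Sigma}\cup\Fin(\Sigma)$, $S_{1}=\elemsorts_{\Sigma}$, and $S_{2}=\elemsorts_{\Sigma}\cup\Fin(\Sigma)$, the hypothesis $S\suq S_{2}$ of \Cref{thm:lpar} holds (indeed $S=S_{2}$). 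The theorem then gives strong politeness of $\Ta_{\Sigma_{1}}\oplus\Ta_{\Sigma_{2}}$ w.r.t.\ $S_{1}\cup(S_{2}\setminus S)=\elemsorts_{\Sigma}\cup\emptyset=\elemsorts_{\Sigma}$, which by the decomposition is precisely the claim that $\Ta_{\Sigma}$ is strongly polite w.r.t.\ $\elemsorts_{\Sigma}$.
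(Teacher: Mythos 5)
Your proposal matches the paper's own proof essentially step for step: the same split into $\Sigma_{1}$ (finite structure sorts) and $\Sigma_{2}$ (finite sorts demoted to element sorts, inductive structure sorts), the same observation that constructors of finite sorts only take finite- or element-sorted arguments, the same identity $\Ta_{\Sigma}=\Ta_{\Sigma_{1}}\oplus\Ta_{\Sigma_{2}}$, and the same instantiation $S=S_{2}=\elemsorts_{\Sigma}\cup\Fin(\Sigma)$, $S_{1}=\elemsorts_{\Sigma}$ in \Cref{thm:lpar} together with \Cref{thm:polite_fin,thm:polite_ind}. Your discussion of why the decomposition identity holds is in fact slightly more detailed than the paper, which simply asserts it; otherwise the two arguments coincide.
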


\begin{remark}
\label{rem:composition}
A concrete witness for $\Ta_{\Sigma}$ in the general case,
that we call $\witness_{\Sigma}$,
is obtained by first applying the witness from \Cref{def:IDT_witness}
and then applying the witness from \Cref{def:IDT_finite_witness} on the literals
that involve finite sorts.
A direct finite witnessability proof can be obtained 
by using the same arguments from the proofs of
\Cref{lem:IDT_witness,lem:IDT_finite_witness}.
This witness is simpler than the one produced in the proof
from \cite{JBLPAR} of \Cref{thm:lpar}, that involves purification and 
arrangements. In our case, we do not consider arrangements, but instead
notice that the resulting function is additive, and hence
ensures
strong finite witnessability.
\end{remark}

\section{Axiomatizations}
\label{sec:axiomatizations}
In this section, we discuss the possible connections between
the politeness of $\Ta_{\Sigma}$ and some axiomatizations of trees. 
We show how
 to get a
reduction of any $\Ta_{\Sigma}$-satisfiability problem into a 
satisfiability problem modulo an axiomatized theory of trees.
The latter can be decided using syntactic unification. 

Let $\Sigma$  be a datatypes signature. 
The set 
$\TREE_{\Sigma}^{\ast}$ of axioms is defined as the union of all the sets of axioms in \Cref{fig:axioms} (where upper case letters denote implicitly universally quantified variables).
Let $\TREE_{\Sigma}$ be the set obtained from
$\TREE_{\Sigma}^{\ast}$ by dismissing
${\mathit{Ext}_{1}}$ and
${\mathit{Ext}_{2}}$.
Note that because of 
$\mathit{Acyc}$, we have that
$\TREE_{\Sigma}$ is infinite
(that is, consists of infinitely many axioms)
unless all sorts in $\structsorts$ are finite.
$\TREE_{\Sigma}$ is a generalization of the 
theory of Absolutely Free Data Structures (AFDS)
from \cite{CFR19}
to many-sorted signatures with selectors and testers.
In what follows we identify $\TREE_{\Sigma}$ (and $\TREE_{\Sigma}^{\ast}$)
with the class of structures that satisfy them when there is no ambiguity.

\begin{proposition}
\label{sound}
Every $\TREE_{\Sigma}^{\ast}$-unsatisfiable formula is 
$\Ta_{\Sigma}$-unsatisfiable.
\end{proposition}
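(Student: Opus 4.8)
The plan is to establish the contrapositive in its strongest form, namely the class inclusion $\Ta_{\Sigma}\suq\TREE_{\Sigma}^{\ast}$ (viewing both as classes of structures). Once this inclusion is shown, the proposition follows immediately: if $\phi$ is $\Ta_{\Sigma}$-satisfiable, then some $\Ta_{\Sigma}$-interpretation $\Aa$ satisfies $\phi$; since its variable-free part lies in $\Ta_{\Sigma}\suq\TREE_{\Sigma}^{\ast}$, the very same $\Aa$ is a $\TREE_{\Sigma}^{\ast}$-interpretation satisfying $\phi$, so $\phi$ is $\TREE_{\Sigma}^{\ast}$-satisfiable. Contrapositively, every $\TREE_{\Sigma}^{\ast}$-unsatisfiable formula is $\Ta_{\Sigma}$-unsatisfiable. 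Thus the whole argument reduces to the single structural claim that every datatypes structure is a model of all the axioms in \Cref{fig:axioms}.

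First I would fix an arbitrary datatypes $\Sigma$-structure $\Aa$, generated by some $\elemsorts$-sorted set $D$. By definition, $\sigma^{\Aa}=T_{\sigma}(\consig{\Sigma},D)$ for every sort, each constructor $c$ is interpreted as syntactic term formation, each selector satisfies $s_{c,i}^{\Aa}(c(t_{1}\til t_{n}))=t_{i}$, and each tester $is_{c}^{\Aa}$ is exactly the set of elements whose outermost symbol is $c$. The task then reduces to checking, schema by schema, that $\Aa$ validates every axiom.

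Then I would verify each family of axioms against this concrete structure. Injectivity and distinctness of constructors are immediate, since in the tree domain $c(t_{1}\til t_{n})=d(u_{1}\til u_{m})$ holds only when $c=d$ and the arguments coincide pointwise. The selector axioms hold by the defining clause for $s_{c,i}^{\Aa}$, and the tester axioms hold both because $is_{c}^{\Aa}$ picks out exactly the $c$-headed terms and because every element of a $\structsorts$-sort is headed by some constructor, a coverage fact built into the definition of $T_{\sigma}(\consig{\Sigma},D)$. For the acyclicity schema $\mathit{Acyc}$ I would use the $\depth$ function from \Cref{def:trees}: every proper subterm of a tree has strictly smaller depth, so no tree can equal a proper subterm of itself, and this single observation discharges the entire (possibly infinite) schema uniformly. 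Finally, for the extensionality axioms $\mathit{Ext}_{1}$ and $\mathit{Ext}_{2}$, I would argue that a $c$-headed tree is reconstructed by applying $c$ to its own selector values, so that any two $c$-headed trees with matching selector values coincide.

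I expect the extensionality step to be the main subtlety, and acyclicity the main place where the global tree structure (not just the local definitional clauses) must be invoked. The care required for $\mathit{Ext}_{1},\mathit{Ext}_{2}$ comes from the fact that selectors applied to the \emph{wrong} constructor are left arbitrary in a datatypes structure; I would therefore need to confirm that these axioms only relate selectors applied to a matching constructor, so the arbitrary off-diagonal values cannot falsify them. Acyclicity, in turn, relies on the well-foundedness requirement of \Cref{def:IDT_sig}, which guarantees that inductive sorts are genuinely populated by finite-depth trees so that the depth argument applies.
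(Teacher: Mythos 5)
Your proof is correct and coincides with the paper's own justification: the paper offers no explicit proof of this proposition, treating it as immediate from the fact $\valid{\Ta_{\Sigma}}\TREE_{\Sigma}^{\ast}$ (invoked again in the proof of \Cref{groundcomplete}), which is precisely the inclusion $\Ta_{\Sigma}\suq\TREE_{\Sigma}^{\ast}$ that you verify schema by schema, including the genuinely delicate points (selectors applied to the wrong constructor are unconstrained by $\mathit{Proj}$, and the depth argument for $\mathit{Acyc}$). One small inaccuracy worth fixing: acyclicity does not rely on the well-foundedness requirement of \Cref{def:IDT_sig}, which only guarantees non-emptiness of the $\structsorts$-domains; your depth argument goes through because every element of $T_{\sigma}(\consig{\Sigma},D)$ is a finite-depth tree by the stage-wise construction in \Cref{def:trees}.
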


\begin{figure}
 \centering

\fbox{
 
\begin{minipage}{115mm}
$$
\begin{array}{l@{~~~~~}l}
(\mathit{Inj}) & \{ c(X_1,\dots,X_n) = c(Y_1,\dots,Y_n) \ra \bigwedge_{i=1}^n X_i = Y_i ~|~ c \in \constructors \}\\
(\mathit{Dis}) & \{ c(X_1,\dots,X_n) \neq d(Y_1,\dots,Y_m) ~|~ c,d \in \constructors, c \neq d \}\\
(\mathit{Proj}) & \{ s_{c,i}(c(X_1, \dots, X_n)) = X_i ~|~ c \in \constructors, i \in [1,n] \} \\
({\Is_{1}}) & \{ is_c(c(X_1, \dots, X_n)) ~|~ c \in \constructors \} \\
({\Is_{2}}) & \{ \neg is_c(d(X_1, \dots, X_n)) ~|~ c,d \in \constructors, c \neq d \} \\
(\mathit{Acyc}) & \{ X \neq t[X] ~|~ \mbox{$t$ is a non-variable $\consig{\Sigma}$-term that contains $X$ } \} \\
({\mathit{Ext}_{1}}) & 
\{ \bigvee_{c: 
\sigma_{1}\times\ldots\times\sigma_{n}\ra\sigma \in 
\constructors} is_{c}(X) ~|~ \sigma \in \structsorts\} \\
({\mathit{Ext}_{2}}) & 
\{ \exists \ora{y}\:.\:is_c(X) \ra X = c(\ora{y}) ~|~ c \in \constructors \}
\end{array}
$$
\end{minipage}
}
\caption{Axioms for $\TREE_{\Sigma}$ and $\TREE_{\Sigma}^{\ast}$}
 \label{fig:axioms}
\end{figure}

\begin{remark}
\label{superpositionremark}
Along the lines of \cite{DBLP:journals/tocl/ArmandoBRS09}, a superposition calculus can be applied to get a $\TREE_{\Sigma}$-satisfiability procedure. Such a calculus has been used in \cite{DBLP:journals/entcs/BonacinaE07,CFR19} for a theory of trees with selectors but no testers. To handle testers, one can use a classical encoding of predicates into first-order logic with equality, by representing an atom $is_c(x)$ as a flat equality $\Is_c(x) = \mathbb{T}$ where $\Is_c$ is now a unary function symbol and $\mathbb{T}$ is a constant. Then, a superposition calculus dedicated to $\TREE_{\Sigma}$ can be obtained by extending the standard superposition calculus \cite{DBLP:journals/tocl/ArmandoBRS09} with some expansion rules, one for each axiom of $\TREE_{\Sigma}$ \cite{CFR19}. For the axioms ${\Is_{1}}$ and ${\Is_{2}}$, the corresponding expansion rules are respectively
$
x = c(x_1,\dots,x_n) \vdash \Is_c(x) = \mathbb{T} \mbox{~ if $c\in \constructors$}$, and 
$x = d(x_1,\dots,x_n) \vdash \Is_c(x) \neq \mathbb{T} \mbox{~ if $c,d \in \constructors, c \neq d$}$.
Further,
consider the theory of finite trees defined from $\TREE_{\Sigma}$ by dismissing $\mathit{Proj}, {\Is_{1}}$ and ${\Is_{2}}$. 
Being defined by Horn clauses, it is convex.
Further, it is a Shostak theory~\cite{DBLP:journals/jacm/Shostak79,DBLP:conf/unu/MannaZ02,DBLP:journals/iandc/KrsticC05} admitting a solver and a canonizer \cite{CFR19}. The solver is given by a syntactic unification algorithm \cite{DBLP:books/el/RV01/BaaderS01} and the canonizer is the identity function. The satisfiability procedure built using the solver and the canonizer can be applied to decide $\TREE_{\Sigma}$-satisfiability problems containing $\consig{\Sigma}$-atoms.
\end{remark}

The following result shows that any $\Ta_{\Sigma}$-satisfiability problem can 
be reduced to a $\TREE_{\Sigma}$-satisfiability problem. 
This leads to a $\Ta_{\Sigma}$-satisfiability procedure. 

\begin{proposition}
\label{groundcomplete}
Let $\Sigma$ be a finite datatypes signature
and $\varphi$ any conjunction of flat $\Sigma$-literals
including an arrangement over the variables in $\varphi$. 
Then, there exists a $\Sigma$-formula $\varphi'$
such that:
\begin{enumerate}
\item $\varphi$ and $\exists\ora{w}\:.\:\varphi'$ are $\Ta_{\Sigma}$-equivalent, where
$\ora{w}=\fv{}{\varphi'} \backslash \fv{}{\varphi}$.
\item $\varphi'$ is $\Ta_{\Sigma}$-satisfiable iff $\varphi'$ is $\TREE_{\Sigma}$-satisfiable.
\end{enumerate}
\end{proposition}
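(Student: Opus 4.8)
The plan is to take $\varphi'$ to be $\witness_{\Sigma}(\varphi)$, the combined witness described in \Cref{rem:composition}, obtained by first applying $\iwitness$ (\Cref{def:IDT_witness}) and then applying $\fwitness$ (\Cref{def:IDT_finite_witness}) to the literals over the finite sorts. Intuitively this witness makes the top-level constructor of every variable relevant to a selector or a tester explicit, and fully expands every variable of a finite sort down to its element-sorted leaves. Item~1 is then immediate from the equivalence properties of the two witnesses (\Cref{lem:equivreq,lem:fin-eq}), composed as in \Cref{rem:composition}: $\varphi$ and $\exists\ora{w}\:.\:\varphi'$ are $\Ta_{\Sigma}$-equivalent with $\ora{w}=\fv{}{\varphi'}\setminus\fv{}{\varphi}$. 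It remains to establish Item~2, whose two directions I treat separately.

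For the easy direction, suppose $\varphi'$ is $\Ta_{\Sigma}$-satisfiable. Every datatypes $\Sigma$-structure satisfies all axioms of \Cref{fig:axioms}: $\mathit{Inj}$ and $\mathit{Dis}$ hold because constructors are interpreted freely, $\mathit{Proj}$, ${\Is_{1}}$ and ${\Is_{2}}$ hold by the semantics of selectors and testers, and $\mathit{Acyc}$ holds because every element of a datatypes structure is a finite tree and hence cannot properly contain itself. Thus any $\Ta_{\Sigma}$-model of $\varphi'$ is already a $\TREE_{\Sigma}^{\ast}$-model, hence a $\TREE_{\Sigma}$-model, so $\varphi'$ is $\TREE_{\Sigma}$-satisfiable; this is the model-level counterpart of \Cref{sound}.

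For the converse, the \emph{hard} direction, suppose $\Aa$ is a $\TREE_{\Sigma}$-model of $\varphi'$. I would reconstruct a datatypes model $\Ba$ of $\varphi'$ by reusing the constructions in the proofs of \Cref{lem:IDT_witness,lem:IDT_finite_witness} almost verbatim. The crucial observation is that those constructions never invoke the exhaustiveness axioms $\mathit{Ext}_{1}$ and $\mathit{Ext}_{2}$ — the only axioms dropped when passing from $\TREE_{\Sigma}^{\ast}$ to $\TREE_{\Sigma}$ — and rely only on data that a $\TREE_{\Sigma}$-model already provides: the interpretation of the element sorts, the partition of variables into equivalence classes, which classes satisfy a nullary tester, the guessed constructor equations, and the disequalities. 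Since $\varphi$ contains an arrangement, the equivalence classes and the disequalities are fixed syntactically, so $\Aa$ and $\Ba$ agree on them by construction. Concretely, I select for each disjunction introduced by $\iwitness$ the disjunct satisfied by $\Aa$, build the relation $\prec$ from the constructor equations, and use $\mathit{Acyc}$ to conclude that $\prec$ is a DAG; on the inductive sorts the minimal classes are assigned sufficiently deep trees to separate the disequalities, while on the finite sorts the fixpoint of $\fwitness$ already pins every variable to a full constructor term over its element leaves, which determines its value directly (with $\mathit{Inj}$ and $\mathit{Dis}$ keeping distinct such terms apart). Interpreting the selectors as in the selector-interpretation step (Step~5) of the proof of \Cref{lem:IDT_witness} then gives $\Ba\models\varphi'$, establishing $\Ta_{\Sigma}$-satisfiability.

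The main obstacle is precisely this converse direction: one must verify carefully that discarding $\mathit{Ext}_{1}$ and $\mathit{Ext}_{2}$ does not invalidate the reconstruction, i.e.\ that a $\TREE_{\Sigma}$-model — whose domain may contain ``atomic'' elements that are not constructor terms — still supplies a consistent skeleton from which a genuine datatypes model satisfying every literal of $\varphi'$ can be built. The delicate points are that acyclicity alone suffices to make $\prec$ a DAG, that the depth-based assignment preserves the disequalities fixed by the arrangement on the inductive sorts, and that the finite and inductive sorts can be handled uniformly, which, if needed, can be organized through the decomposition $\Ta_{\Sigma}=\Ta_{\Sigma_{1}}\oplus\Ta_{\Sigma_{2}}$ of \Cref{sec:combined_IDT}.
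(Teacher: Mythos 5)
Your choice of $\varphi'=\witness_{\Sigma}(\varphi)$ and your Item~1 argument are fine, and the easy direction of Item~2 is indeed just \Cref{sound} at the model level. The genuine gap is in the hard direction: the claim that the constructions of \Cref{lem:IDT_witness,lem:IDT_finite_witness} carry over ``almost verbatim'' to a $\TREE_{\Sigma}$-model $\Aa$ is false at exactly the steps where those constructions import \emph{semantic} data from $\Aa$ rather than syntactic data from $\Gamma'$. Concretely: (a)~Step~4, case~1 classifies a class $\alpha$ as nullary when $\Aa\models is_c(x)$ for some $x\in\alpha$ and nullary $c$, and then sets $\alpha^{\Ba}:=\alpha^{\Aa}$. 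In a $\TREE_{\Sigma}$-model the axioms ${\Is_{1}}$ and ${\Is_{2}}$ constrain testers only on elements in the image of constructors, so two \emph{distinct} atoms of the abstract domain can both spuriously satisfy $is_c$; copying their values puts non-tree elements into $\Ba$ (ill-typed, since $\sigma^{\Ba}$ must be $T_{\sigma}(\consig{\Sigma},\Ba_{\elemsorts})$), while collapsing both classes to the term $c$ violates a disequality of the arrangement. So ``which classes satisfy a nullary tester'' is precisely the datum a $\TREE_{\Sigma}$-model does \emph{not} reliably provide. (b)~The entire construction in the proof of \Cref{lem:IDT_finite_witness} sets $x^{\Ba}:=x^{\Aa}$ for struct variables and reuses $\Aa$'s selectors; this is meaningless when $x^{\Aa}$ is an abstract element rather than a tree with element-sorted leaves. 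The paper itself flags this at the end of \Cref{sec:axiomatizations}: the satisfying $\TREE_{\Sigma}$-interpretation ``is not necessarily a $\Ta_{\Sigma}$-interpretation, which makes the approach of \Cref{sec:politedt} incompatible.''

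Your proposal is repairable, and your own sketch for finite sorts (read each value off the syntactic constructor expansion forced by the $\fwitness$ fixpoint, using congruence plus $\mathit{Inj}$/$\mathit{Dis}$) is the right repair; the same must be done for nullary classes (call $\alpha$ nullary only if some equation $x=c$ with nullary $c$ \emph{occurs} in $\Gamma'$, set $\alpha^{\Ba}$ to the term $c$, and treat classes that merely satisfy some $is_c^{\Aa}$ spuriously as ordinary minimal classes receiving deep trees). Your observations that $\mathit{Acyc}$ makes $\prec$ a DAG and that $\mathit{Inj}$/$\mathit{Dis}$ give the uniqueness needed in the analogue of \Cref{lem:IDT_well_defined} are correct, so after this surgery the depth-gap argument goes through. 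Note, though, that the paper's actual proof takes a different route that avoids the problem structurally: it first \emph{eliminates} all testers and the matching selector applications syntactically by guessing $is$-constraints (the $w(\varphi,\rho_{eq})$ rewriting, iterated to a fixpoint on finite sorts), so that only $\consig{\Sigma}$-literals and ``wrong-constructor'' selector equations remain; the hard direction then runs syntactic unification on the constructor equalities (\Cref{lem:trees-Shostak}), and instantiates the variables fixed by the most general unifier with trees of widely separated depths. The unification route gets the solved form and occur-check machinery for free; your route, once repaired, buys uniformity with \Cref{sec:politedt} and yields the alternative politeness proof the paper alludes to — but as written, the verbatim-reuse claim is a real gap, not just a delicate point.
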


\Cref{groundcomplete} can be easily lifted to any conjunction of
$\Sigma$-literals $\varphi$ by flattening and then guessing all
possible arrangements over the variables.
%
Further, $\exists\ora{w}\:.\:\varphi'$ 
and $\varphi$ are not only $\Ta_{\Sigma}$-equivalent
but also $\TREE_{\Sigma}^{\ast}$-equivalent.
As a consequence,
Proposition~\ref{groundcomplete} also holds when 
stated using $\TREE_{\Sigma}^{\ast}$ instead of
$\Ta_{\Sigma}$. 

\medskip
We conclude this section with a short discussion on the connection to
\Cref{sec:politedt}.
Both the current section 
and \Cref{sec:politedt} rely on two constructions: 
$(i)$~A formula transformation ($\witness_{\Sigma}$ in \Cref{sec:politedt},
$\varphi\mapsto\varphi'$ in the current section);
and $(ii)$~A small model construction (finite witnessability in \Cref{sec:politedt},
equisatisfiability between $\Ta_{\Sigma}$ and $\TREE$ in \Cref{groundcomplete}).
While these constructions are similar in both sections, they are not the same.
A nice feature of the constructions of \Cref{sec:politedt} is that they
clearly separate between steps $(i)$ and $(ii)$.
The witness is very simple, and amounts to adding to the input formula
literals and disjunctions that trivially follow from
the original formula in $\Ta_{\Sigma}$.
Then, the resulting formula is post-processed in step $(ii)$,
according to a given satisfying interpretation.
Having a satisfying interpretation allows us to greatly simplify the formula,
and the simplified formula is useful for the model construction.
In contrast, the satisfying $\TREE_{\Sigma}$-interpretation that we start with in step $(ii)$
of the current section is
not necessarily a $\Ta_{\Sigma}$-interpretation,
which makes the approach of \Cref{sec:politedt} incompatible, compared
to the syntactic unification approach that we employ here.
For that, some of the post-processing steps of \Cref{sec:politedt} are employed in 
step $(i)$ itself, in order to eliminate all testers and as much selectors
as possible.
In addition, a pre-processing is applied in order to include an arrangement.
The constructed interpretation finitely witnesses $\varphi'$ and
so this technique can be used to produce an alternative proof of politeness.

\section{Conclusion}
\label{sec:conc}
In this paper we have studied the theory of algebraic
datatypes, as it is defined by the \smtlib standard.
Our investigation included both finite and inductive datatypes.
For this theory, we have proved that it is strongly polite,
making it amenable for combination with other theories by
the polite combination method.
Our proofs used the notion of additive witnesses, also 
introduced in this paper.
We concluded by extending existing axiomatizations and
a decision procedure of trees to support this theory of datatypes.

There are several directions for further research that 
we plan to explore.
First, we plan to continue to prove that more important theories are
strongly polite, with an eye to recent extensions of the datatypes theory,
namely datatypes with shared selectors \cite{DBLP:conf/cade/ReynoldsVBTB18} 
and co-datatypes \cite{DBLP:journals/jar/ReynoldsB17}.
Second, we envision to further investigate the possibility to prove politeness using superposition-based satisfiability procedures. 
%
Third, we plan to study extensions of the theory of datatypes
corresponding to finite trees including function symbols with some
equational properties such as associativity and commutativity to model data structures such as
multisets \cite{DBLP:conf/cade/Sofronie-Stokkermans09}. We want to
focus on the politeness of such extensions.  Initial work in that
direction has been done in \cite{SMT16}, that we plan to build on.

\paragraph{Acknowledgments.}  We are thankful to the anonymous reviewers for their comments.

\bibliographystyle{splncs04}
\bibliography{polite}

\begin{report}
  
\newpage\appendix

\section{Proofs}
In the proofs that follow, we use the following notations.
When $X$ is a set of $\Sigma$-sentences and $Y$ a set of $\Sigma$-formulas, we write
$X\consequence{\Sigma}Y$ when every $\Sigma$-interpretation
that satisfies all the formulas in $X$ must satisfy all the formulas in $Y$.
When $X$ is a set of $\Sigma$-formulas, we write
We write $\valid{T}X$ to state the every formula in $X$ is satisfied
by every $T$-interpretation.
We sometimes identify conjunctions of literals
with sets of literals, when there is no ambiguity.

\subsection{Proof of \Cref{additive}}
Let $\phi\in \qf(\Sigma)$. 
We prove that $\witness(\phi)$ is strongly finitely witnessed for $T$ w.r.t.\ $S$.
Let $V$ be a set of variables of sorts in $S$ and $\delta_{V}$ an arrangement of $V$.
We prove that $\witness(\phi)\w\delta_{V}$ is  finitely witnessed for $T$ w.r.t.\ $S$. 
Suppose it is $T$-satisfiable. Then since $\witness$ is $S$-additive
 and $\delta_{V}$
is a conjunction of flat literals that 
contains only variables of sorts in $S$ as terms,
$\witness(\witness(\phi)\w\delta_{V})$ is also $T$-satisfiable.
$\witness$ is a witness for $T$ w.r.t.\ $S$, and hence
$\witness(\witness(\phi)\w\delta_{V})$ has a finite witness $\Aa$ for $T$
w.r.t.\ $S$.
By $T$-equivalence, $\Aa\models\witness(\phi)\w\delta_{V}$.
Since both formulas have the same set of $S$-variables,
$\Aa$ is also a finite witness of $\witness(\phi)\w\delta_{V}$.
\qed

\subsection{Existential Theories are Strongly Polite}
To complement the discussion of 
\Cref{secweakstrong} on existential theories, we prove the following
proposition:

\begin{proposition}
\label{prop:existext}
\label{conc:exist}
If $T$ is existential then it is strongly polite w.r.t.\ $S$.
\end{proposition}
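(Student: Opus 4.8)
The plan is to show that $T$ is \emph{additively polite} w.r.t.\ $S$ and then invoke \Cref{addfinwit} to conclude strong politeness. Being existential, $T$ is the class of all $\Sigma$-structures satisfying a fixed existential sentence $\psi=\exists\,\ora{x}\:.\:\chi(\ora{x})$ with $\chi$ quantifier-free; I may assume $\chi$ is a conjunction of flat literals, since flattening $\chi$ only introduces additional existentially quantified auxiliary variables that can be prepended to $\ora{x}$. Thus the two obligations are smoothness w.r.t.\ $S$ and the existence of a computable, $S$-additive witness.

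For smoothness I would use the classical fact that existential sentences are preserved upwards from a substructure to any extension. Given a quantifier-free $\phi$, a $T$-interpretation $\Aa\models\phi$, and a cardinal function $\kappa$ with $\kappa(\sigma)\geq\card{\sigma^{\Aa}}$ for each $\sigma\in S$, I would build $\Aa'$ by adjoining fresh elements to each sort $\sigma\in S$ until $\card{\sigma^{\Aa'}}=\kappa(\sigma)$, keeping $\Aa$ as a substructure: every function symbol is interpreted as in $\Aa$ on old tuples and sent to an arbitrary fixed old element on tuples containing new elements, and predicates are left false on new tuples. Since $\Aa\subseteq\Aa'$, the quantifier-free $\phi$ remains satisfied under the unchanged assignment to its free variables, and $\Aa\models\psi$ forces $\Aa'\models\psi$, so $\Aa'$ is again a $T$-interpretation with the prescribed cardinalities. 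This works for every $S$.

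For additive finite witnessability I would generalize the witness $\witness''$ of the running example. On a conjunction of flat literals $\phi$, the witness appends a fresh copy $\chi(\ora{w})$ of the body of $\psi$ (with $\ora{w}$ fresh and already flat), and a literal $x_{\sigma}=x_{\sigma}$ for a fresh variable in each sort $\sigma\in S$ not otherwise mentioned; each addition is guarded, so that $\chi(\ora{w})$ is appended only when $\phi$ does not already syntactically contain flat literals realizing $\chi$, and $x_{\sigma}=x_{\sigma}$ only when $\sigma$ is not yet populated. The $T$-equivalence requirement holds because $\exists\,\ora{w}\:.\:\witness(\phi)$ is logically equivalent to $\phi\w\psi$, which is $T$-equivalent to $\phi$ since every $T$-interpretation models $\psi$. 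The guards are exactly what make the function $S$-additive: for $\varphi$ a conjunction of flat literals all of whose terms are $S$-sorted variables, $\witness(\phi)\w\varphi$ already realizes $\chi$ and already populates every sort of $S$, so no guard fires and $\witness(\witness(\phi)\w\varphi)=\witness(\phi)\w\varphi$, yielding both $T$-equivalence and equality of the $S$-sorted variable sets for free.

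The main work, and the step I expect to be the real obstacle, is finite witnessability: from a $T$-interpretation $\Aa\models\witness(\phi)$ I must produce a finite witness $\Ba$ whose $S$-sorts contain only interpretations of free variables. I would take $\sigma^{\Ba}=\fv{\sigma}{\witness(\phi)}^{\Aa}$ for $\sigma\in S$, inherit all variable interpretations from $\Aa$, and redefine each function symbol so that its value on tuples named by flat literals of $\witness(\phi)$ matches $\Aa$ (this is consistent precisely because $\Aa$ satisfies those literals) and is an arbitrary domain element otherwise; predicates are inherited likewise. Because $\witness(\phi)$ contains the flat literals of $\chi(\ora{w})$ and $\Ba$ preserves them, $\Ba\models\chi(\ora{w})$, hence $\Ba\models\psi$ and $\Ba$ is a $T$-interpretation, and $\Ba\models\witness(\phi)$ by the same literal-wise preservation. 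The delicate point is verifying that shrinking the domain in the presence of function symbols never forces a flat literal to fail and that the redefined functions remain total into the smaller domain — which is where the flatness of both the input and of $\chi$ is essential. Combining smoothness with this additive witness gives additive politeness, and \Cref{addfinwit} then yields that $T$ is strongly polite w.r.t.\ $S$.
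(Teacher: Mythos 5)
Your proposal follows essentially the same route as the paper's proof: the paper likewise defines a guarded witness $\witness_{T}$ that conjoins a renamed copy of the quantifier-free body of the defining existential sentence unless such a copy is already syntactically present as a conjunct, observes that this guard yields $S$-additivity, obtains smoothness by the model-extension construction of \cite{10.1007/11559306_3} (proof of Proposition~23), and concludes strong politeness via \Cref{addfinwit}. Your write-up is in fact more detailed than the paper's, which leaves additivity and finite witnessability at the level of ``it can be shown''; in particular, your padding literals $x_{\sigma}=x_{\sigma}$ for sorts of $S$ not otherwise populated are a genuinely needed detail that the paper's sketch omits (without them the shrunken $S$-sorted domains of the finite witness could be forced to be empty).

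One caution: your opening normalization is stated too strongly. Flattening a quantifier-free $\chi$ does not produce a conjunction of flat literals; it produces a positive conjunction of definitional flat equations $z=f(\vec{y})$ together with a Boolean combination whose atoms mention only variables ($P(\vec{z})$, $z=z'$). Obtaining a genuine conjunction of literals would require DNF-splitting, which turns $\psi$ into a \emph{disjunction} of existential sentences and so changes the shape of the theory's defining sentence. Fortunately your argument never uses conjunctiveness, only flatness: in your finite-witness surgery, function symbols occur only in the positive definitional equations, which your redefinition of $f^{\Ba}$ on the named tuples preserves, while the remaining Boolean combination involves only variables and predicates, whose truth values are untouched when the $S$-domains are restricted to the values of free variables. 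Restated with this weaker normal form (flat definitional conjunction plus a variable-only Boolean combination), your proof goes through as the paper intends.
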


\begin{proof}
Let $\varphi$ be the formula whose existential closure defines
$T$.
Define a function
$\witness_{T}$ by
\[\witness_{T}(\phi)=
\begin{cases}
\phi & \phi=\fe_{1}\w\ldots\w \varphi'\w\ldots\w\fe_{n}\\
\phi\w\varphi'' & otherwise
\end{cases}\]
where $\varphi'$ is some formula obtained from $\varphi$ by replacing its variables
with variables not in $\fv{}{\fe_{i}}$ for every $1\leq i\leq n$, and
$\varphi''$ is an any formula obtained from $\varphi$ by replacing its
variables with variables not in $\fv{}{\phi}$.
It can be shown that $\witness_{T}$ is an $S$-additive witness for $T$ w.r.t.\ $S$.

For smoothness,
we note that the construction of 
a satisfying interpretation with adequate cardinalities
can be obtained in a similar fashion to
the construction done in \cite{10.1007/11559306_3} 
(proof of proposition 23).
The resulting interpretation 
is in $T$.
The proposition is then obtained by \Cref{addfinwit}.
\end{proof}

\subsection{Proof of \Cref{lem:IDT_additive}}
As mentioned before \Cref{def:IDT_witness},
$\iwitness$ is extended from conjunctions of flat literals
to arbitrary quantifier-free formulas by transforming
the input formula to flat DNF form and then applying 
the witness on each disjunct of the DNF, taking
the disjunction of these applications.
Full details about the correctness of this process
can be found in \cite{JBLPAR}.
A similar argument can be made for additivity.
We do so here for the case of $\iwitness$ that was defined in \Cref{def:IDT_witness}.

Let $\phi$ be a quantifier-free $\Sigma$-formula,
 $D_{1}\vee\ldots\vee D_{m}$ its flat-DNF
form, and $\varphi$ a conjunction of flat literals
such that every term in $\varphi$ is a variable whose sort
is in $\elemsorts$.
By the above,
$
\iwitness(\iwitness(\phi)\w\varphi)=
\iwitness(\iwitness(D_{1}\vee\ldots\vee D_{m})\w\varphi)=
\iwitness((\iwitness(D_{1})\vee\ldots\vee \iwitness(D_{m}))\w\varphi)
$.
For each $1 \leq i\leq m$, 
let $E_{i}^{1}\vee\ldots E_{i}^{k_{i}}$ be the flat DNF form
of $\iwitness(D_{i})$. 
Since $\iwitness$ does not introduce non-flat literals, no new variables are introduced in the
transformation from $\iwitness(D_{i})$ to
$E_{i}^{1}\vee\ldots \vee E_{i}^{k_{i}}$, but only propositional
transformations are employed.
The equation list above can continue with
$
\iwitness((E_{1}^{1}\w\varphi)\vee \ldots \vee (E_m^{k_{m}}\w\varphi))=
\iwitness(E_{1}^{1}\w\varphi)\vee\ldots\vee
\iwitness(E_m^{k_{m}}\w\varphi)
$.
Now, for each $1\leq i\leq m$ and $1\leq j\leq k_{i}$,
$E_i^{j}$ is a conjunction of flat literals in the DNF-form of
$\iwitness(D_{i})$.
By the construction of $\iwitness$, 
each such $E_i^j\w\varphi$
does not satisfy any of the preconditions in $\iwitness$
for the addition of any formula:
constructors have already been
guessed for selectors and testers, and each conjunction in the DNF
includes a guess. In addition, $\varphi$ does not contain any
constructors and testers. Also, each conjunction in the DNF
includes at least one variable of each $\structsorts$-sort.
Thus $\iwitness(E_i^j\w\varphi)=E_i^j\w\varphi$.
This means that $\iwitness(\iwitness(\phi)\w\varphi)=
(E_{1}^{1}\w\varphi)\vee\ldots\vee (E_m^{k_{m}}\w\varphi)
$.

Similarly, 
$
\iwitness(\phi)\w\varphi=
(\iwitness(D_{1})\vee\ldots \iwitness(D_{m}))\w\varphi
$,
which is logically equivalent to
$(E_{1}^{1}\vee \ldots\vee E_m^{k_{m}})\w\varphi$, and hence to
$(E_{1}^{1}\w\varphi)\vee \ldots\vee (E_m^{k_{m}}\w\varphi)$, which
by the above is equivalent to 
$\iwitness(\iwitness(\phi)\w\varphi)$.
Further, since $\iwitness$ does not introduce non-flat literals,
the set of $\elemsorts$-variables is the same in both formulas.
\qed

\subsection{Proof of \Cref{lem:equivreq}}
Each variable in $\ora{w}$ occurs exactly once in $\Gamma$.
Let $\Gamma'$ be obtained from $\exists\ora{w}.\Gamma$ by pushing
each existential quantifier to the literal that contains
its corresponding quantified variable.
Clearly, $\exists\ora{w}\Gamma$ and $\Gamma'$ are logically equivalent, and in
particular they are $\Ta_{\Sigma}$-equivalent.
$\Gamma'$ contains all the conjuncts of $\phi$ as top-level
conjuncts. Hence clearly every $\Ta_{\Sigma}$-interpretation
that satisfies $\Gamma'$ also satisfies $\phi$.
For the converse, let $\Aa$ be a $\Ta_{\Sigma}$-interpretation that
satisfies $\phi$
and $\Delta$ a top-level conjunct of $\Gamma'$.
\bi
\item If $\Delta$ is also a literal of $\phi$ then $\Aa\models\Delta$.
\item If $\Delta$ corresponds to a formula that was added by \Cref{it:selectors} of \Cref{def:IDT_witness},
then it has the form 
$(\exists \ora{u_{1}} y \ora{u_{2}}.x=c(\ora{u_{1}}, y, \ora{u_{2}})) \lor
(\bigvee_{d\neq c} \exists\ora{u_{d}}.x=d(\ora{u_{d}}))$ and $y=s_{c,i}(x)$ is a
literal of $\phi$.
$\Aa\models y=s_{c,i}(x)$. 
If $\Aa\models is_{c}(x)$ then it must satisfy the first disjunct of $\Delta$.
Otherwise, $\Aa$ must
satisfy one of the other disjuncts. 
In both cases
$\Aa\models\Delta$.
\item If $\Delta$ corresponds to a formula that was added by \Cref{it:tester1} of \Cref{def:IDT_witness}
then it has the form $\exists\ora{u}.x=c(\ora{u})$ and $is_{c}(x)$ is 
a literal of $\phi$.
Since $\Aa\models is_{c}(x)$, we must have
$\Aa\models \Delta$.
\item If $\Delta$ corresponds to a formula that was added by \Cref{it:testers} of \Cref{def:IDT_witness}
then it has the form
$\bigvee_{d\neq c}\exists\ora{u}x=d(\ora{u})$
and $\neg is_{c}(x)$ is in $\phi$.
Since $\Aa\not\models is_{c}(x)$, we must have
$\Aa\models\Delta$.
\item If $\Delta$ corresponds to a formulas that was added by \Cref{it:identity} then it is trivially satisfied.

\ei
\qed

\subsection{Proof of \Cref{lem:IDT_well_defined}}
The case of nullary and minimal constructors is
clearly well-defined.
Suppose $\alpha_{i}$ is not minimal.
Then the sort of its variables is in $\structsorts$.
We prove that there is a unique list 
$\beta_{1}\til \beta_{n}$, of equivalence classes, all elements of
$\set{\alpha_{1}\til \alpha_{i-1}}$ 
 and a unique constructor $c$
such that 
$y=c(x_{1}\til x_{n})$ occurs in $\Gamma'$ for some
$y\in\alpha_{i}$ and $x_{1}\in\beta_{1}\til x_{n}\in\beta_{n}$.
{\bf Existence:} 
$\alpha_{i}$ is not minimal. 
Hence there exists some $\beta_{1}$ such that
$\beta_{1}\prec\alpha_{i}$. 
Hence w.l.g.\ there exists some $y\in\alpha_{i}$ and some
$x_{1}\in\beta_{1}$ such that
$y=c(x_{1},x_{2}\til x_{n})$ is in $\Gamma'$ for some
$x_{2}\til x_{n}$ and $c$.
By definition, this means that $[x_{2}]\til [x_{n}]\prec\alpha_{i}$
as well, and thus $[x_{j}]$ must occur before $\alpha_{i}$
in the topological ordering for every $1\leq j\leq n$, hence
$[x_{j}]\in\set{\alpha_{1}\til \alpha_{i-1}}$ for each $j$.
{\bf Uniqueness:}
Suppose there are also equivalence classes
$\beta_{1}'\til \beta_{m}'$, all elements of
$\set{\alpha_{1}\til \alpha_{i-1}}$ 
 and a constructor $c'$
such that 
$y'=c'(x_{1}'\til x_{m}')$ occurs in $\Gamma'$ for some
$y'\in\alpha_{i}$ and $x_{1}'\in\beta_{1}'\til x_{m}'\in\beta_{m}'$.
Since  $y'=c'(x_{1}'\til x_{m}')$
and
$y=c(x_{1}\til x_{n})$ both occur in $\Gamma'$ and are
thus satisfied by $\Aa$, and $[y]=[y']$,
we must have $c=c'$, $n=m$,
and $\Aa\models x_{j}=x_{j}'$ for every $j$, hence
$[x_{j}]=[x_{j}']$ for every $j$.
\qed

\subsection{Proof of \Cref{lem:IDT_sat}}
We start with the following lemma:

\begin{lemma}
\label{lem:xeqy}
        $x^{\Aa} = y^{\Aa}$ iff $x^{\Ba} = y^{\Ba}$ for every $x, y \in \fv{}{\Gamma}$.
\end{lemma}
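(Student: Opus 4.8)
The plan is to reduce the claim to injectivity of the map $\alpha\mapsto\alpha^{\Ba}$ on equivalence classes and then to establish that injectivity by induction on the topological order $\alpha_1\til\alpha_m$ used in Step~4, with the depth spacing handling the one delicate case. First observe that variables of distinct sorts are never equated in either $\Aa$ or $\Ba$, so I may assume $x$ and $y$ share a sort. For the forward direction, if $x^{\Aa}=y^{\Aa}$ then $x\equiv_{\Aa}y$, hence $[x]=[y]$; for $\sigma\in\elemsorts$ this gives $x^{\Ba}=x^{\Aa}=y^{\Aa}=y^{\Ba}$, and for $\sigma\in\structsorts$ it gives $x^{\Ba}=[x]^{\Ba}=[y]^{\Ba}=y^{\Ba}$. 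For the converse, the element-sorted case is again immediate since $e^{\Ba}=e^{\Aa}$, so the entire burden is to show that distinct structure-sorted classes $\alpha\neq\beta$ of the same sort satisfy $\alpha^{\Ba}\neq\beta^{\Ba}$.

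I would prove this by strong induction on $\max(i,j)$, writing $\alpha=\alpha_i$, $\beta=\alpha_j$ with $i<j$, and casing on the later class $\alpha_j$. If $\alpha_j$ is nullary then so is $\alpha_i$, and $\alpha_i^{\Ba}=\alpha_i^{\Aa}\neq\alpha_j^{\Aa}=\alpha_j^{\Ba}$. If $\alpha_j$ is minimal but not nullary, then the Step~4 choice forces $\depth(\alpha_j^{\Ba})>\depth(\alpha_i^{\Ba})+d\geq\depth(\alpha_i^{\Ba})$, so the terms differ in depth. If $\alpha_j$ is non-minimal, write $\alpha_j^{\Ba}=c(\beta_1^{\Ba}\til\beta_n^{\Ba})$, noting $n\geq1$ and $c$ non-nullary. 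When $\alpha_i$ is nullary, or is non-minimal with a top constructor $c'\neq c$, the top symbols differ; and when $\alpha_i=c(\gamma_1^{\Ba}\til\gamma_n^{\Ba})$ with the same $c$, injectivity of $c$ in $\Aa$ together with $\alpha_i^{\Aa}\neq\alpha_j^{\Aa}$ forces $\gamma_k^{\Aa}\neq\beta_k^{\Aa}$ for some $k$; since $\gamma_k$ and $\beta_k$ are a same-sorted pair with positions strictly below $j$, the induction hypothesis yields $\gamma_k^{\Ba}\neq\beta_k^{\Ba}$, hence $\alpha_i^{\Ba}\neq\alpha_j^{\Ba}$.

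The remaining and hardest case is $\alpha_i$ non-nullary minimal against a later non-minimal $\alpha_j$, where $\alpha_j^{\Ba}$ may well be deeper than $\alpha_i^{\Ba}$ so neither the top-symbol nor the injectivity argument applies. I expect this to be the main obstacle, and I would resolve it with a depth-band argument that justifies the additive gap $d$ in Step~4. For any class $\gamma$ one shows by induction along the DAG $G$ that $\depth(\gamma^{\Ba})=\max_{\mu}(D_{\mu}+\ell_{\mu})$, where $\mu$ ranges over the minimal classes reachable from $\gamma$, $D_{\mu}=\depth(\mu^{\Ba})$, and $\ell_{\mu}\leq d$ is the length of the longest $\prec$-path from $\mu$ to $\gamma$. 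Because the non-nullary minimal classes receive depths that increase in steps strictly larger than $d$, the maximum is attained at a single minimal class $\mu^{*}$, so $\depth(\gamma^{\Ba})$ lands in the band $[D_{\mu^{*}},D_{\mu^{*}}+d]$; these bands are pairwise disjoint for distinct non-nullary minimal classes and lie strictly above the depths of all nullary classes. Comparing $\depth(\alpha_i^{\Ba})=D_{\alpha_i}$ with the band containing $\depth(\alpha_j^{\Ba})$ then separates the two in each sub-situation, namely when the dominant class of $\alpha_j$ equals $\alpha_i$ (so $\ell\geq1$ makes $\alpha_j$ strictly deeper), is a different non-nullary minimal class (disjoint bands), or is nullary (so $\alpha_j$ is strictly shallower than $D_{\alpha_i}$). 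This closes the last case and completes the induction.
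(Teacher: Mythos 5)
Your proof is correct and takes essentially the same route as the paper's: both reduce the lemma to injectivity of the map $\alpha\mapsto\alpha^{\Ba}$ on equivalence classes, proceed by induction along the topological order, and separate a non-nullary minimal class from a later non-minimal one by comparing depths of the assigned trees. Your depth-band lemma ($\depth(\gamma^{\Ba})=\max_{\mu}(D_{\mu}+\ell_{\mu})$ over minimal sources $\mu$, with pairwise disjoint bands $[D_{\mu},D_{\mu}+d]$) is merely a sharpened, explicit version of the paper's ``source vertex'' dichotomy, and your contrapositive treatment of the equal-constructor case (some argument pair differs in $\Aa$, so by the induction hypothesis it differs in $\Ba$) is the same argument the paper runs as a contradiction.
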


\begin{proof}
    The left-to-right direction follows directly from the definition of $\Ba$,
    that does not distinguish distinct elements
    inside a single equivalence class of $\equiv_{\Aa}$.
    For the converse, we prove that 
         $\alpha_1^{\Ba}\til \alpha_p^{\Ba}$ are pairwise distinct
for every $1\leq p\leq m$ by induction on $p$.
From this the claim follows: if $x^{\A}\neq y^{\Aa}$,
then $[x]=\alpha_{p}$ and $[y]=\alpha_{q}$ for some $p\neq q$,
and therefore $x^{\Ba}=[x]^{\Ba}\neq [y]^{\Ba}=y^{\Ba}$.

        Consider the base case for the first $l$ classes.
        {\textcircled{\small 1}}For all the equivalence classes of $\elemsorts$-sorted variables,
as they are also minimal, and the definition is the same as in $\Aa$, their interpretations are distinct by definition.
{\textcircled{\small 2}}For the nullary classes, the definition is also the same as in $\Aa$, thus they have distinct interpretations.
{\textcircled{\small 3}}For the equivalence classes of non-nullary $\structsorts$-sorted variables, 
they have different interpretations with the nullary classes, as their interpretations all have the depth more than $d$.
And among themselves,
the depths of the interpretations
of these classes is a strongly increasing monotonic sequence by definition.

    Now assume the claim for $p$ ($l\leq p<n$) vertices.
    It is sufficient to prove that $\alpha_{p+1}$ has a different interpretation from all the previous vertices.
    Assume otherwise, and let $i\leq p$ with $\alpha_{i}^{\Ba}=\alpha_{p+1}^{\Ba}$.
        $\alpha_{p+1}$ is not minimal.
        Since $\alpha_{p+1}$ cannot be nullary,
        $\alpha_i^{\Ba}=\alpha_{p+1}^{\Ba}$ cannot be nullary, thus we have $i>r$.
        Then let us consider two cases.
        \be
            \item $\alpha_i$ is also not minimal:
            There must be a constructor $c$ such that
                $\alpha_i^{\Ba} = c(\beta_1^{\Ba}\til \beta_{n}^{\Ba})$
and
            $\alpha_{p+1}^{\Ba} = c(\hat{\beta}_1^{\Ba}\til \hat{\beta}_{n}^{\Ba})$
            for some equivalence classes $\beta_1\til \beta_n$ and $\hat{\beta}_1\til \hat{\beta}_n$.
            Then from $\alpha_i^{\Ba}=\alpha_{p+1}^{\Ba}$,
            we have $\beta_k^{\Ba}=\hat{\beta}_k^{\Ba}$ for $k=1\til n$.
            
            Also, note that $\beta_1\til \beta_n, \hat{\beta}_1\til \hat{\beta}_n\in \{\alpha_1\til \alpha_p\}$.
            Let $1\leq k\leq n$.
            By the induction hypothesis, either $d_k=\hat{d}_k$ or $d_k^{\Ba}\neq \hat{d}_k^{\Ba}$.
            By the above, the former must hold.
            Similarly to the proof of  \Cref{lem:IDT_well_defined}, we get that $\alpha_i^{\Aa}=\alpha_{p+1}^{\Aa}$,
            which is a contradiction to the fact that $i<p+1$.
            
            \item $\alpha_i$ is minimal:
            An equivalence class $\beta$ is said to be a source of $\alpha_{p+1}$,
                if there is a path from $\beta$ to $\alpha_{p+1}$ in $G$ and $\beta$ is minimal.
                
                If $\alpha_{p+1}$ has a source vertex $\beta_j$ such that $\depth(\beta_j^{\Ba}) \geq \depth(\alpha_i^{\Ba})$,
then there is $\depth(\alpha_{p+1}^{\Ba}) > \depth(\beta_j^{\Ba})\geq \depth(\alpha_i^{\Ba})$.
      
                Otherwise any source vertex $\beta_j$ of $\alpha_{p+1}$ has $\depth(\beta_j^{\Ba})+d<\depth(\alpha_i^{\Ba})$. 
But the depth of $\alpha_{p+1}^{\Ba}$ is at most $depth(\beta_{j}^{\Ba})+D$ for a source vertex $\beta_j$ which has the highest depth.
Thus $\alpha_i^{\Ba}\neq \alpha_{p+1}^{\Ba}$.
        \ee
\end{proof}

We now proceed with the proof of \Cref{lem:IDT_sat}.
We start by proving that $\Ba\models \Gamma'$.
     $\Gamma'$ is a conjunction of
    flat literals without selectors and testers.
    We consider each type of conjunct separately.
        \bi
            \item Literals of the form $x=y$ or $x\neq y$:
                By Lemma \ref{lem:xeqy}, and the fact that $\Aa\models\Gamma'$,
                these literals hold in interpretation $\Ba$.
        
        \item Literals of the form $x=c$, where $c$ is a nullary constructor:
                In this case, $x^{\Ba}$ is defined as $x^{\Aa}$.
                Since $\Aa \models \Gamma'$, we have $\Ba\models x=c$.

            \item Literals of the form $x=c(w_1\til w_{n})$ for some constructor $c$:
                Since $\Aa \models \Gamma'$, 
                $c$ is the only constructor that construct $x$ in $\Gamma'$.
                From the definition of $\Ba$,
                $x^{\Ba}=c(d_1^{\Ba}\til d_n^{\Ba})$ for some $d_1\til d_n$.
                And by \Cref{lem:IDT_well_defined},
                we have $[w_k]=[d_k]$ for $k=1\til n$.
                So we have $x^{\Ba}=c(w_1^{\Ba}\til w_n^{\Ba})$.
        \ei

Next, we prove that $\Ba\models\Gamma_{2}$.
$\Gamma_{2}$ is a conjunction of the literals of $\Gamma'$, together
with literals of the form $y=s_{c,i}(x)$ from $\Gamma$.
Let $y=s_{c,i}(x)$ be such a conjunct of $\Gamma_{2}$.
Then by the definition of $\iwitness$ and $\Gamma'$,
there are two cases:
\bi
\item $x=c(\ldots,y,\ldots)$ is in $\Gamma'$.
Thus $[y]\prec [x]$ and 
$x^{\Ba}=c(\ldots,y^{\Ba},\ldots)$ by
the definition of $\Ba$.
In particular, $x^{\Ba}\in is_{c}^{\Ba}$.
In this case, $s_{c,i}(x)^{\Ba}$ is set to $y^{\Ba}$
by the definition of $\Ba$.
\item $x=d(\ldots)$ is in $\Gamma'$ for some $d\neq c$.
We consider the following sub-cases.
\bi
\item If $d$ is nullary then $[x]$ is nullary.
In this case, $x^{\Ba}=x^{\Aa}$.
$\Aa\models\Gamma'$ and hence $x^{\Aa}\in is_{d}^{\Aa}$,
which means that
$x^{\Ba}\in is_{d}^{\Ba}$ as well.
In particular, $x^{\Ba}\notin is_{c}^{\Ba}$.
Since $y=s_{c,i}(x)$ occurs in $\Gamma_{2}$, 
$s_{c,i}(x)^{\Ba}$ is set to be $y^{\Ba}$.
\item If $d$ is not nullary then 
$[x]$ cannot be minimal, and hence $x^{\Ba}\in is_{d}^{\Ba}$ by the definition of $\Ba$.
In particular, $x^{\Ba}\notin is_{c}^{\Ba}$.
Since $y=s_{c,i}(x)$ occurs in $\Gamma_{2}$,
$s_{c,i}(x)^{\Ba}$ is set to be $y^{\Ba}$ in this case.
\ei
\ei
Hence $\Ba\models \Gamma_{2}$.

Next, we show that $\Ba\models\Gamma_{1}$, which is obtained from $\Gamma_{2}$
by the addition of conjunctions of the form $is_{c}(x)$ and $\neg is_{c}(x)$.
Let $is_{c}(x)$ be such a literal in $\Gamma_{1}$.
Then it is also a literal of $\Gamma$.
Then by the definition of $\iwitness$ and of $\Gamma'$, 
this means that $\Gamma'$ contains a literal of the form
$x=c(y_{1}\til y_{n})$.
Since $\Ba\models\Gamma'$, we have $\Ba\models is_{c}(x)$.
Now let $\neg is_{c}(x)$ be a literal of $\Gamma_{1}$.
Then it is also a literal of $\Gamma$.
By the definition of $\iwitness$ and $\Gamma'$,
the latter contains a literal of the form 
$x=d(t_{1}\til y_{n})$ for some $d\neq c$.
Since $\Ba\models\Gamma'$, we have
$\Ba\models\neg is_{c}(x)$.

Finally, we have seen that
$\Ba$ satisfies a disjunct in every
disjunction of $\Gamma$, as well as all of the top-level
literals of $\Gamma$, which means that
$\Ba\models\Gamma$.
\qed

  \subsection{Proof of \Cref{groundcomplete}}

We first recall some standard notions of (syntactic) unification~\cite{DBLP:books/el/RV01/BaaderS01}.
  
Given a signature $\Sigma$ and a denumerable set of variables $V$, the set of
$\Sigma$-terms over $V$ defines a $\Sigma$-structure, also denoted by
$T(\Sigma,V)$.  A {\em substitution} is an endomorphism of $T(\Sigma,V)$
with only finitely many variables not mapped to themselves. A
substitution is denoted here by $\sigma = \{ x_k \mapsto t_k \}_{k \in
  K}$, where the {\em domain} of $\sigma$ is $\{ x_k\}_{k \in
  K}$, $K$ being a finite set of indices.  Application of a
substitution $\sigma$ to a term $t$ is written $\sigma(t)$.

Given a conjunction of $\Sigma$-equalities $\Phi$ of the form $\bigwedge_{k \in K}
s_k = t_k$ ($K$ being a finite set of indices), a {\em unifier} of $\Phi$ is a substitution $\mu$ such that
$\mu(s_k) = \mu(t_k)$ for each $k \in K$. A conjunction of
$\Sigma$-equalities $\Gamma$ of the form $\bigwedge_{k \in K} x_k=t_k$ is said to
be a {\em solved form} if for each $k \in K$, $x_k$ is a variable occurring
only once in $\Gamma$. Given any conjunction of
$\Sigma$-equalities $\Phi$, a syntactic unification algorithm computes $\bot$
if $\Phi$ has no unifier, otherwise it computes a solved form $\Gamma$
such that $\Phi$ and $\Gamma$ have the same set of unifiers, equivalently $T(\Sigma,V) \models \Phi \Leftrightarrow \Gamma$.
If $\Gamma$ denotes the solved form $\bigwedge_{k \in K} x_k = t_k$, then the corresponding
substitution $\mu = \{ x_k \mapsto t_k \}_{k \in K}$ is a unifier of
$\Gamma$. Actually, $\mu$ is a most general unifier of
$\Gamma$, meaning that all the unifiers of $\Gamma$ are instances of
$\mu$.

We begin with the following lemma, which is based on \cite{CFR19}:

\begin{lemma}
  \label{lem:trees-Shostak}
  Assume $\Phi$ is any conjunction of $\consig{\Sigma}$-equalities and $\Delta$ is any conjunction of $\consig{\Sigma}$-disequalities. 

  \begin{itemize}
  \item $\Phi \wedge \Delta$ is $\TREE_{\Sigma}$-satisfiable iff there exists a most general unifier $\mu = \{ x_k \mapsto t_k \}_{k \in K}$ of $\Phi$ such that
    for any $v \neq w$ in $\Delta$, $\mu(v) \neq \mu(w)$.
\item If $\mu = \{ x_k \mapsto t_k \}_{k \in K}$ is a most general unifier of $\Phi$, then the conjunction of $\consig{\Sigma}$-equalities $\Gamma = (\bigwedge_{k \in K} x_k = t_k)$ is such that
    \begin{itemize}
    \item for each $k \in K$, $x_k$ occurs only once in $\Gamma$,
    \item $\Phi$ and $\Gamma$ are $\TREE_{\Sigma}$-equivalent. 
    \end{itemize}
  \end{itemize}
\end{lemma}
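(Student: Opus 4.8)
The plan is to reduce both items to the classical correspondence between $\TREE_{\Sigma}$-satisfiability and syntactic unifiability of $\consig{\Sigma}$-equalities, exploiting that the standard Martelli--Montanari unification transformations are precisely the semantic content of the axioms $\mathit{Inj}$, $\mathit{Dis}$ and $\mathit{Acyc}$ that remain relevant once we restrict to $\consig{\Sigma}$-formulas. Since $\Phi$ and $\Delta$ contain no selectors or testers, the axioms $\mathit{Proj}$, $\Is_{1}$ and $\Is_{2}$ play no constraining role here; they are only needed to interpret those extra symbols when building concrete models.

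I would prove the second item first. A most general unifier produced by a syntactic unification algorithm is returned in solved form, so each $x_{k}$ occurs exactly once in $\Gamma=\bigwedge_{k\in K}x_{k}=t_{k}$ by construction, which gives the single-occurrence property immediately. For the $\TREE_{\Sigma}$-equivalence of $\Phi$ and $\Gamma$, I would trace the unification algorithm rule by rule and check that each transformation preserves the class of satisfying $\TREE_{\Sigma}$-interpretations: decomposition $c(s_{1},\dots,s_{n})=c(t_{1},\dots,t_{n})\mapsto\bigwedge_{i}s_{i}=t_{i}$ is justified by $\mathit{Inj}$ together with its trivial converse, while orientation and variable elimination are valid in any structure interpreting equality as identity. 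Because $\Phi$ is rewritten into $\Gamma$ by a finite sequence of such $\TREE_{\Sigma}$-preserving equivalences, $\Phi$ and $\Gamma$ are $\TREE_{\Sigma}$-equivalent.

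For the first item I would combine the second item with two model-theoretic observations. In the forward direction, $\TREE_{\Sigma}$-satisfiability of $\Phi\w\Delta$ forces $\TREE_{\Sigma}$-satisfiability of $\Phi$, and $\Phi$ must then be unifiable: a unification failure is either a clash, contradicted by $\mathit{Dis}$, or an occurs-check failure, contradicted by the relevant instance of $\mathit{Acyc}$, so a non-unifiable $\Phi$ is already $\TREE_{\Sigma}$-unsatisfiable. Fixing an mgu $\mu$ and using $\Phi\equiv\Gamma$ from the second item, any $\TREE_{\Sigma}$-model $\Aa$ of $\Phi\w\Delta$ satisfies $x_{k}^{\Aa}=t_{k}^{\Aa}$, hence $v^{\Aa}=\mu(v)^{\Aa}$ and $w^{\Aa}=\mu(w)^{\Aa}$ for every disequality; were $\mu(v)$ and $\mu(w)$ syntactically equal, $v\neq w$ would fail, so $\mu(v)\neq\mu(w)$ for each $v\neq w$ in $\Delta$. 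For the converse, given such a $\mu$, I would exhibit the term-algebra interpretation $T(\consig{\Sigma},V)$ over the variables left free by $\mu$, interpreting each selector by $\mathit{Proj}$ on the image of its constructor and arbitrarily elsewhere, and each tester $is_{c}$ as ``has head symbol $c$''; this is a $\TREE_{\Sigma}$-structure in which syntactic and semantic equality of $\consig{\Sigma}$-terms coincide. Assigning each variable its $\mu$-image makes $\Phi$ true, since $\mu$ unifies $\Phi$, and makes each $\mu(v)\neq\mu(w)$ true, so $\Phi\w\Delta$ is $\TREE_{\Sigma}$-satisfiable.

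The main obstacle is the model-theoretic bridging. Because $\TREE_{\Sigma}$ omits $\mathit{Ext}_{1}$, its models may contain struct-sorted elements that are not constructor terms, so I must justify each unification step as an equivalence over \emph{all} $\TREE_{\Sigma}$-interpretations rather than only over the free term algebra, and I must verify carefully that the term-algebra witness, with selector values chosen arbitrarily off each constructor's image, really does satisfy $\mathit{Acyc}$ and the tester axioms. The Shostak-style treatment of absolutely free data structures in \cite{CFR19} provides the template for these checks.
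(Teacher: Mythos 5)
Your proof is correct and takes essentially the route the paper intends: the paper states this lemma without its own proof, deferring to \cite{CFR19} and its unification preliminaries, and your argument is exactly the standard expansion of that reference---justifying each unification transformation (decomposition, clash, occurs-check) by $\mathit{Inj}$, $\mathit{Dis}$ and $\mathit{Acyc}$ over all $\TREE_{\Sigma}$-interpretations, and witnessing the converse in the free constructor term algebra with selectors fixed by $\mathit{Proj}$ on constructor images, arbitrary elsewhere, and testers as head-symbol predicates. One caveat worth keeping explicit, which you and the paper both implicitly adopt: ``most general unifier'' must be read as the idempotent solved-form mgu returned by a syntactic unification algorithm (as set up in the paper's unification preamble), since an arbitrary mgu, e.g.\ one composed with a variable renaming, need not satisfy the single-occurrence property claimed in the second item.
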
  

Let us introduce the notion of $is$-constraint that will be used in the proof of \Cref{groundcomplete}.
Given a finite set of $\structsorts$-sorted variables $V$, an
$is$-constraint over $V$ is set of literals $\rho \suq \{ is_{c}(x) ~|~ x \in V,c\in\constructors
\}$ such that for any $is_c(x) \in \rho$,
$c:\sigma_{1}\times\ldots\times\sigma_{n}\ra\sigma \in \constructors$ if $x$ is
of sort $\sigma$, and for every $x\in V$, $is_c(x)\in\rho$ for some $c$.
The set of $is$-constraints over $V$ is denoted by $IS(V)$.
Given an $is$-constraint $\rho$, $\rho_{eq} = \{ x = c(y_1,\dots,y_n) ~|~
is_c(x) \in \rho \}$ such that all the variables $y_1,\dots,y_n$ in $\rho_{eq}$
are fresh.

Now we proceed with the proof of \Cref{groundcomplete}.

Assume $\varphi$ is any conjunction of flat
$\Sigma$-literals including an arrangement over the variables in $\varphi$.
Consider the set of variables $GV(\varphi)$ defined as
$$\{ x ~|~ is_c(x) \in \varphi \}
\cup
\{ x ~|~ \neg is_c(x) \in \varphi \}
\cup
\{ y ~|~ x=s_{c,i}(y) \in \varphi, s_{c,i} \in \selectors \}$$
excluding all the variables in 
$$\{ y ~|~ x=s_{c,i}(y), y = d(\dots) \in \varphi, s_{c,i} \in \selectors, d \in \constructors, d \neq c \}.$$
We want to build a formula equivalent to $\varphi$ but including at least one $\sigma$-sorted variable for each $\sigma \in \elemsorts$. For this reason, 
let us denote $\varphi_{te}$ a conjunction of trivial equalities $x_\sigma = x_\sigma$, one for every $\sigma \in \elemsorts$ such that $\fv{\sigma}{\varphi} =  \emptyset$, $x_\sigma$ being a fresh $\sigma$-sorted variable.
If $GV(\varphi)= \emptyset$, define $\varphi_1 = \varphi \wedge \varphi_{te}$. Otherwise, define $\varphi_1$ as follows:
$$\varphi_1 = \bigvee_{\rho \in IS(GV(\varphi))} w(\varphi,\rho_{eq}) \wedge \rho_{eq} \wedge \varphi_{te}$$
where $w(\varphi,\rho_{eq})$ is inductively defined as follows:

\begin{enumerate}

\item $w(\emptyset,\rho_{eq}) = \emptyset$,

\item $w(\{ is_c(x) \} \cup \varphi,\rho_{eq}) = w(\varphi,\rho_{eq})$ if $x=c(y_1,\dots,y_n) \in \rho_{eq}$,

\item $w(\{ \neg is_c(x) \} \cup \varphi,\rho_{eq}) = \bot$ if $x=c(y_1,\dots,y_n) \in \rho_{eq}$,

\item $w(\{ is_d(x) \} \cup \varphi,\rho_{eq}) = \bot$ if $x=c(y_1,\dots,y_n) \in \rho_{eq}$ and $c \neq d$,

\item $w(\{ \neg is_d(x) \} \cup \varphi,\rho_{eq}) = w(\varphi,\rho_{eq})$ if $x=c(y_1,\dots,y_n) \in \rho_{eq}$ and $c \neq d$,

\item $w(\{ y = s_{c,i}(x) \} \cup \varphi,\rho_{eq}) = \{ y = y_i \} \cup w(\varphi,\rho_{eq})$ if $x=c(y_1,\dots,y_n) \in \rho_{eq}$,
  

\item otherwise, $w(l \cup \varphi,\rho_{eq}) = \{ l \} \cup w(\varphi,\rho_{eq})$.

\end{enumerate}
where all the $y_{i}$ are fresh.
Note that the above construction is similar to the one given in~\cite{CFR19} (see Proposition~4 in~\cite{CFR19}). One can observe that $\varphi \wedge \rho_{eq}$ and $w(\varphi,\rho_{eq}) \wedge \rho_{eq}$ are $\TREE_{\Sigma}^{\ast}$-equivalent. In particular, for the case $(6.)$ above, it follows from the projection axiom $\mathit{Proj}$ in $\TREE_{\Sigma}^{\ast}$. In addition the guessing of $is$-constraint preserves the $\TREE_{\Sigma}^{\ast}$-equivalence since $\TREE_{\Sigma}^{\ast}$ includes the extensionality axioms ${\mathit{Ext}_{1}}$ and ${\mathit{Ext}_{2}}$. Thus $\varphi$ and $\exists \ora{w}  . \varphi_1$ are $\TREE_{\Sigma}^{\ast}$-equivalent for $\ora{w} = \fv{}{\varphi_1} \backslash \fv{}{\varphi}$. 

\medskip

For any set of literals $\phi$, let us define
$$Min(\phi) = \fv{}{\phi} \backslash \{ x ~|~ x=c(\dots) \mbox{ occurs in } \phi \}.$$
Starting from $\varphi_1$, consider the following sequences of formulas, obtained by guessing $is$-constraints for ``minimal'' variables of finite sorts: 
$$\varphi_{j+1} = \bigvee_{\rho \in IS(\bigcup_{\sigma \in \Fin(\Sigma)} Min_\sigma(\varphi_j))} \varphi_j \wedge \rho_{eq}$$
By definition of $\Fin(\Sigma)$, there exists necessarily some $j'$ such that
the set of variables $\bigcup_{\sigma \in \Fin(\Sigma)} Min_\sigma(\varphi_{j'})$ is empty. In that case, let us define $\varphi' = \varphi_{j'}$.

\bigskip
It is routine to show that
$\varphi$ and $\exists \ora{w}  . \varphi'$ are $\Ta_{\Sigma}$-equivalent for  the set of fresh variables $\ora{w} = \fv{}{\varphi'} \backslash \fv{}{\varphi}$, using the following facts:
\begin{itemize}
  
\item $\valid{\Ta_{\Sigma}} \TREE_{\Sigma}^{\ast}$,

\item as shown above, $\varphi$ and $\exists \ora{w}  . \varphi_1$ are $\TREE_{\Sigma}^{\ast}$-equivalent for the set of fresh variables $\ora{w} = \fv{}{\varphi_1} \backslash \fv{}{\varphi}$,

\item $\varphi_j$ and $\exists\ora{w}  . \varphi_{j+1}$ are $\TREE_{\Sigma}^{\ast}$-equivalent, for $\ora{w} = \fv{}{\varphi_{j+1}} \backslash \fv{}{\varphi_j}$ and any $j=1,\dots, j'-1$, since $\TREE_{\Sigma}^{\ast}$ includes the extensionality axioms ${\mathit{Ext}_{1}}$ and ${\mathit{Ext}_{2}}$. 
  
\end{itemize}  
  
Let us now show that $\varphi'$ is $\Ta_{\Sigma}$-satisfiable iff $\varphi'$ is $\TREE_{\Sigma}$-satisfiable.

$(\Rightarrow)$ directly follows from \Cref{sound}.

$(\Leftarrow)$ If $\varphi'$ is $\TREE_{\Sigma}$-satisfiable, there exists a $\TREE_{\Sigma}$-interpretation $\Aa$ and a disjunct $\psi$ of $\varphi'$ such that $\Aa \models \psi$. By construction of $\varphi'$, $\psi$ is a conjunction
$\psi_{\constructors} \wedge \psi_{\selectors}$ where
\begin{itemize}
\item $\psi_{\constructors}$ is a conjunction of $\consig{\Sigma}$-literals,
\item $\psi_{\selectors}$ is a conjunction of equalities of the form $x = s_{c,i}(y)$.  
\end{itemize}  

Since $\psi$ holds in a $\TREE_{\Sigma}$-interpretation, the conjunction of $\consig{\Sigma}$-equalities in $\psi_{\constructors}$ has a most general unifier. By Lemma~\ref{lem:trees-Shostak}, $\psi_{\constructors}$ is $\TREE_{\Sigma}$-equivalent to a conjunction of literals $\Gamma \wedge \Delta$ such that

\begin{itemize}
  
\item $\Gamma$ is a conjunction of equalities $\bigwedge_{k \in K} x_k = t_k$ such that for each $k \in K$, $x_k$ is a variable occurring only once in $\Gamma$,
\item $\Delta$ is the conjunction of disequalities in $\psi$,

\item given the substitution $\mu = \{ x_k \mapsto t_k \}_{k \in K}$, for any $v \neq w$ in $\Delta$, $\mu(v) \neq \mu(w)$.
  
\end{itemize}

Consider the set of variables $MV = \{ x \in \fv{\structsorts}{\varphi'} ~|~ \mu(x) = x \}$. Since the sorts of variables in $MV$ are all inductive, there exists a substitution $\alpha$ from $MV$ to $T(\consig{\Sigma}, \fv{\elemsorts}{\varphi'})$ such that for any $x, y \in MV$, 
$ \alpha(x)^\Aa  = \alpha(y)^\Aa$ iff $x=y$. According to this substitution $\alpha$, we have for any terms $t,u \in T(\consig{\Sigma},MV \cup \fv{\elemsorts}{\varphi'})$, $\alpha(t)^\Aa = \alpha(u)^\Aa$ iff $t = u$. In particular, we have for any $k,k' \in K$,
$$\alpha(\mu(x_k))^\Aa = \alpha(\mu(x_{k'}))^\Aa \mbox{ iff } \mu(x_k) = \mu(x_{k'}).$$
It is always possible to choose $\alpha$ such that
for any $x,y \in MV$, $x \neq y$, we have
$$|\depth(\alpha(x)) - \depth(\alpha(y)) | > \max \{ \depth(t_k) \}_{k \in K}.$$ 
According to the assumption on $\alpha$, it is impossible to have 
$\alpha(\mu(x_k))^\Aa = \alpha(\mu(x))^\Aa$ for some $k \in K$ and some $x \in MV$. Consequently,
we have for any $x,y \in \fv{\structsorts}{\varphi'}$, 
$$\alpha(\mu(x))^\Aa = \alpha(\mu(y))^\Aa \mbox{ iff } \mu(x) = \mu(y).$$

Let us now consider $\Ba \in \Ta_{\Sigma}$ such that
\begin{itemize}

\item for any $\sigma \in \elemsorts$, $\sigma^{\Ba} = \{ e^\Aa ~|~ e \in \fv{\sigma}{\varphi'} \}$,

\item for any $x \in \fv{\structsorts}{\varphi'}$, $x^\Ba = \alpha(\mu(x))^\Aa$,

\item for any $e \in \fv{\elemsorts}{\varphi'}$, $e^\Ba = e^\Aa$.
  
\end{itemize}
One can observe that $\Ba \models \Gamma \wedge \Delta$ since
\begin{itemize}
\item for any $x_k = t_k$ in $\Gamma$, $\mu(x_k) = \mu(t_k)$ and so $x_k^\Ba = t_k^\Ba$,
\item for any $v \neq w$ in $\Delta$, $\mu(v) \neq \mu(w)$ and so $v^\Ba \neq w^\Ba$.
\end{itemize}

Since $\Gamma \wedge \Delta$ and $\psi_{\constructors}$ are $\TREE_{\Sigma}$-equivalent and 
$ \valid{\Ta_{\Sigma}} \TREE_{\Sigma}$, 
we have $\Ba \models \psi_{\constructors}$.

Let us now consider the conjunction $\psi_{\selectors}$ that contains only equalities of the form
$x = s_{c,i}(y)$. By construction of $\varphi'$, the term $\mu(y)$ is necessarily rooted by a constructor $d \in \constructors$, $d \neq c$. Thus $s_{c,i}^\Ba$ can be defined arbitrarily on $y^\Ba$ since $y^\Ba$ is a standard tree rooted by some constructor $d$ different from $c$. In particular, we can define $s_{c,i}^\Ba$ such that $s_{c,i}^\Ba(y^\Ba) = x^\Ba$. Using this interpretation $\Ba$ for the selectors, we have $\Ba \models \psi_{\selectors}$.

Since $\Ba \models \psi_{\constructors}$ and $\Ba \models \psi_{\selectors}$, we get $\Ba \models \psi$. Since $\psi$ is some disjunct of $\varphi'$, we can conclude that $\Ba \models \varphi'$. 

\qed

\end{report}

\end{document}